\documentclass[preprint,12pt]{elsarticle}

\usepackage{amssymb}
\usepackage{amsmath}
\usepackage{amsthm}
\usepackage{booktabs}
\usepackage{multirow}
\usepackage{graphicx}
\usepackage[hyphens]{url}
\theoremstyle{plain}
\newtheorem{theorem}{Theorem}
\newtheorem{proposition}{Proposition}
\newtheorem{lemma}{Lemma}

\theoremstyle{definition}
\newtheorem{definition}{Definition}
\newtheorem{example}{Example}
\theoremstyle{remark}
\newtheorem{remark}{Remark}

\newcommand{\X}{\mathbf{X}}
\newcommand{\x}{\mathbf{x}}
\newcommand{\z}{\mathbf{z}}
\newcommand{\m}{\mathbf{m}}
\newcommand{\U}{\mathbf{U}}
\newcommand{\mut}{\boldsymbol{\mu}}
\newcommand{\Sig}{\Sigma}
\newcommand{\R}{\mathbb{R}}
\newcommand{\E}{\mathbb{E}}
\newcommand{\Prob}{\mathbb{P}}
\newcommand{\KL}{\mathrm{KL}}
\newcommand{\Unif}{\mathrm{Unif}}
\newcommand{\Ntr}{N_{\mathrm{tr}}}

\newenvironment{procedure}[1]{%
  \begin{center}\begin{minipage}{0.95\linewidth}\rule{\linewidth}{0.8pt}\\[-0.3em]
  \textbf{Procedure #1}\\[-0.6em]\rule{\linewidth}{0.4pt}\par\small}{%
  \par\vspace{-0.3em}\rule{\linewidth}{0.8pt}\end{minipage}\end{center}}

\myfooter[L]{Preprint, September 2026}

\begin{document}

\begin{frontmatter}

\title{Calibrated Order-Randomized Rosenblatt Tests}

\author{Mehrdad Pournaderi}
\ead{m.pournaderi@emofid.com}
\affiliation{organization={Mofid Securities},
            country={Iran}}

\begin{abstract}
We test whether a multivariate vector $\X$ conforms to a specified distribution $F$, a problem in
copula modelling and density forecasting. The Rosenblatt transform reduces it to a test of
uniformity, but depends on an arbitrary coordinate ordering that strongly affects power under
dependence. We study \emph{order randomization}: applying the transform under many random orderings and
merging the evidence with dependence-robust rules. Reordering conserves the total Mahalanobis signal
energy and merely redistributes it, so one ordering is a lucky or unlucky draw. In simulations we observe
significant gains in calibrated power over both the expected single random ordering and
order-invariant references. Two ingredients are essential: a two-sided
base statistic, and a re-estimating parametric bootstrap that restores level under an estimated null
and unlocks the gain. The calibrated pooled tests are robust to the departure's \emph{shape}; no
order-invariant reference we compare is: the symmetric-root test collapses on diffuse
departures, while the shape-flat $\chi^2$ test trails on concentrated ones. We apply it to a
Gaussian foreign-exchange risk model over a decade
of daily data on nine currencies, where it detects episodes such as Brexit and COVID. Though
we focus on Gaussian nulls, the procedure extends to any null whose conditional distributions
can be computed and simulated from.
\end{abstract}

\begin{keyword}
Rosenblatt transformation \sep goodness-of-fit \sep e-values \sep Simes test \sep
parametric bootstrap \sep false discovery rate \sep risk-model validation
\end{keyword}

\end{frontmatter}

\section{Introduction}\label{sec:intro}

Deciding whether a single multivariate vector $\X=(X_1,\dots,X_n)$ conforms to a specified joint
distribution $F$---the hypothesis $H_0:\X\sim F$---is a basic task in copula modeling,
density-forecast evaluation, simulation-based inference and model checking, and anomaly detection.
It is also the per-hypothesis building block of many multiple-testing problems, where one screens a
large collection of such multivariate hypotheses; our focus is the single test, and the multiplicity
layer (Section~\ref{sec:fdr}) is a standard add-on. A classical and very general device for the
single test is the \emph{Rosenblatt} \cite{rosenblatt1952} transformation, which writes $F$ as a
chain of conditionals and applies the probability integral transform (PIT) to each. Under $H_0$ the
resulting values are i.i.d.\ $\Unif[0,1]$, reducing the question to a test of uniformity addressable
by a univariate combination such as the Simes statistic \cite{simes1986}. This places the problem in
the goodness-of-fit / specification-testing tradition \cite{genest2009,dgt1998}; we stress, however,
that each hypothesis concerns a \emph{single} multivariate observation (the uniforms are its $n$
transformed coordinates, not a sample of repeated draws).

The catch is that to use the transform the analyst must first \emph{order} the coordinates, and that
choice is far from innocuous. Under dependence, different orderings whiten the data differently and
yield tests of materially different power: the same observation, ordered two ways, can comfortably
accept under one and decisively reject under the other (Example~\ref{ex:bivar} makes this concrete). The arbitrariness is well documented, and has
been treated as a nuisance to be removed---by fixing an order by convention, or by symmetrizing a
single statistic over orderings \cite{dgt1998,dovern2020}. We read it instead as an
\emph{opportunity}: if no ordering is canonical, committing to one is a gamble, and it is a gamble
one need not take, because a single observation can be transformed under many orderings at once and
the resulting evidence pooled.

Two questions then organize the paper, and the first is prior to the method. \emph{When should
the Rosenblatt reduction be used at all}, rather than a test that never touches an
ordering---the $\chi^2$ energy test, or a symmetrized whitening (\S\ref{sec:baselines})? Our
answer is a robustness argument made quantitative. Every order-invariant reference we examine
has a shape of departure it cannot handle: the symmetric-root test
is strong against concentrated departures and collapses against diffuse ones of the same
energy, while the $\chi^2$ energy test---whose power depends on the departure only through its
energy (Proposition~\ref{prop:chi2}) and is therefore \emph{flat} across shapes---trails the
pooled tests against concentrated departures and is competitive only against diffuse ones
(\S\ref{sec:headline}); the ordering ensemble is the
one contender that need not know the shape in advance. The geometry behind this is the energy
invariance of Lemma~\ref{lem:energy}: reordering conserves the total Mahalanobis signal energy
and only redistributes it across the whitened coordinates, so a fixed ordering's fortunes
depend entirely on how that redistribution happens to fall---committing to one ordering is a
gamble. The second question is then the practical one this paper answers in full:

\begin{quote}\itshape
Does applying the Rosenblatt transform under many random orderings, and combining the resulting
tests, beat using a single (arbitrarily chosen) ordering?
\end{quote}

Whether pooling actually helps is far from obvious, and answering it is most of this paper. Three
things must line up. \emph{First}, one must understand what reordering does to a departure from $F$:
we show (Section~\ref{sec:theory}) that it conserves the total Mahalanobis signal energy and merely
\emph{redistributes} it across the whitened coordinates, so every ordering is a different view of one
fixed signal, a single ordering is a lucky or unlucky draw, and pooling averages over the luck---by
an amount that, in every design we ran, grows with dependence. \emph{Second}, the per-coordinate test must look in both tails, because the whitening routinely
sends a departure into either tail depending on the ordering (Section~\ref{sec:misspec}). \emph{Third}, and this is the crux, the pooled test must be \emph{calibrated}: naive
pooling pays multiplicity penalties that almost exactly cancel its advantage, and under an estimated
null it is not even valid, so a re-estimating parametric bootstrap is needed---and in restoring level
it \emph{unveils} the very power gain those penalties had hidden (Section~\ref{sec:exp}).

When the three align, the payoff is real. Its most useful form is \emph{shape robustness}: no
order-invariant reference is safe in both regimes---the symmetric-root test is strong against
concentrated departures and collapses against diffuse ones of the same energy, where its power
falls by more than $0.30$, while the $\chi^2$ test, whose power is the same for every shape at
fixed energy, is well below the pooled tests against concentrated departures and competitive only
against diffuse ones---while pooling sits near the top of both. Against a single random ordering---the expected
power of an arbitrarily chosen one---the improvement is smaller but systematic: $0.03$--$0.06$ for
sparse departures, growing with dependence in every design we ran, in the direction the
Jensen-gap mechanism of Section~\ref{sec:theory} suggests (from $+0.02$ at
$\rho=0.2$ to $+0.06$ at $\rho=0.8$, known covariance), up to $+0.17$ for dense ones, and, for
sparse departures, unchanged when the covariance must be estimated. The geometry behind this is provable, and Section~\ref{sec:theory} does so: the log-evidence
advantage of order-averaging is exactly the spread of views across orderings. We identify the
combiners to use and the number of orderings, and mark where pooling is merely neutral
(near-independence). Screening many such hypotheses at once needs no new machinery, only
BH or e-BH on top (Section~\ref{sec:fdr}); it is also where shape robustness pays most, since the
hypotheses in a screen need not deviate from the null distribution in the same way. That is also
where a modern alternative meets us head-on, and \S\ref{sec:conformal} compares the procedure with
conformal novelty detection \cite{bates2023,marandon2024}: the two test different nulls---conformity
to a specified $F$ against exchangeability with a reference sample---and when the model is
correct, the one setting in which their power can be compared, the calibrated pooled test leads in
the sparse regime that screening actually presents, across reference samples up to ten
times the size of the screen. Section~\ref{sec:fx} takes the whole pipeline
to real data: validating a
yearly-refitted Gaussian risk model on nine currencies over 2016--2025, where the calibrated test
rejects at close to the nominal rate in calm years and the pooled e-values single out the Brexit, COVID, 2022, and 2025
episodes. Section~\ref{sec:prelim} first assembles the tools; readers fluent in the
Rosenblatt transform, Simes, and e-values may skim it.

\paragraph{Notation}
$\Phi$ and $\varphi$ are the standard normal CDF and density; $z_a=\Phi^{-1}(1-a)$ the upper-$a$
quantile. $S_n$ is the symmetric group (the set of all $n!$ orderings) on $\{1,\dots,n\}$. For
$\sigma\in S_n$, $P_\sigma$ is the permutation matrix with $(P_\sigma\x)_k=x_{\sigma(k)}$. For a
vector $\mathbf p\in[0,1]^n$, $p_{(1)}\le\dots\le p_{(n)}$ are its order statistics (the entries
sorted increasingly). ``i.i.d.'' is independent and identically distributed. All distributions are
absolutely continuous unless stated.

\section{Preliminaries}\label{sec:prelim}

Pooling orderings needs only three ingredients: the transform that turns a multivariate hypothesis
into a uniformity question, its explicit Gaussian form (a Cholesky whitening), and the univariate
rules we pool with. We collect them here together with the validity facts the later arguments rely
on; the reader who knows them can move to Section~\ref{sec:method}.

\subsection{The Rosenblatt transformation}

Let $\X\sim F$ on $\R^n$ with absolutely continuous law, and fix an ordering $\sigma\in S_n$. Write
$F_k^\sigma(\cdot\mid\cdot)$ for the conditional CDF of $X_{\sigma(k)}$ given
$X_{\sigma(1)},\dots,X_{\sigma(k-1)}$ under $F$ (with $F_1^\sigma$ the marginal CDF of
$X_{\sigma(1)}$). The \emph{Rosenblatt map} $R_\sigma:\R^n\to[0,1]^n$ is
\begin{equation}\label{eq:rosen}
\big(R_\sigma(\x)\big)_k \;=\; U_k \;=\; F_k^\sigma\!\big(x_{\sigma(k)}\mid x_{\sigma(1)},\dots,x_{\sigma(k-1)}\big),
\qquad k=1,\dots,n.
\end{equation}

\begin{theorem}[Rosenblatt, 1952]\label{thm:rosen}
If $\X\sim F$ then $\U=R_\sigma(\X)$ has i.i.d.\ $\Unif[0,1]$ coordinates, for every fixed
$\sigma\in S_n$.
\end{theorem}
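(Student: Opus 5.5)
Since $\sigma$ is fixed, relabel the coordinates so that $\sigma$ is the identity; this loses no generality because $P_\sigma\X$ is again absolutely continuous with its own chain of conditionals. The plan is to show directly that, for every $k$ and every $u\in[0,1]$,
\begin{equation*}
\Prob\big(U_k\le u \,\big|\, X_1,\dots,X_{k-1}\big)=u \quad\text{almost surely.}
\end{equation*}
Once this is in hand the theorem follows. The conditional law of $U_k$ given $X_{1:k-1}$ is $\Unif[0,1]$ and does not depend on the conditioning values. Moreover $U_1,\dots,U_{k-1}$ are measurable functions of $X_{1:k-1}$. Hence $U_k$ is independent of $(U_1,\dots,U_{k-1})$ and uniform. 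Inducting on $k$, or writing the joint CDF through the tower property
\begin{equation*}
\Prob(U_1\le u_1,\dots,U_n\le u_n)=\E\Big[\mathbf 1\{U_1\le u_1,\dots,U_{n-1}\le u_{n-1}\}\,\Prob\big(U_n\le u_n\mid X_{1:n-1}\big)\Big],
\end{equation*}
then gives $\prod_k u_k$.

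The key step is the one-dimensional probability integral transform applied conditionally. Fix a version of the regular conditional distribution of $X_k$ given $X_{1:k-1}=\x_{1:k-1}$, with CDF $G(\cdot)=F_k(\cdot\mid\x_{1:k-1})$. Absolute continuity of $F$ means the joint density exists, so for almost every conditioning value the conditional law has a density $f(x_{1:k})/f(x_{1:k-1})$. In particular $G$ is continuous for those values. For a continuous CDF $G$ and $Y\sim G$, one has $\Prob(G(Y)\le u)=u$. To see this, let $y_u=\sup\{y:G(y)\le u\}$. Then $\{G(Y)\le u\}=\{Y\le y_u\}$ up to a $G$-null set, the exceptional set being a flat stretch of $G$, which carries no mass. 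Continuity gives $G(y_u)=u$.

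The remaining care is measure-theoretic, and this is the step I expect to be the main obstacle, though a minor one. The map $(\x_{1:k-1},x_k)\mapsto F_k(x_k\mid\x_{1:k-1})$ must be jointly measurable so that $U_k$ is a random variable. I would handle this by defining $F_k$ explicitly as $\int_{-\infty}^{x_k} f(\x_{1:k-1},t)\,dt / f_{1:k-1}(\x_{1:k-1})$, which is measurable by Fubini. I would set it to an arbitrary fixed continuous CDF on the null set where the marginal density vanishes. The conditional identity then only needs to hold almost surely, which suffices for the tower-property computation.
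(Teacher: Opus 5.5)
Your argument is correct and follows the paper's own route: the conditional probability integral transform makes $U_k$ uniform given the prefix $X_1,\dots,X_{k-1}$, hence independent of that prefix and so of $U_1,\dots,U_{k-1}$, and induction (your tower-property identity for the joint CDF) gives joint independence and uniformity. The details you add are sound and fill in what the paper's sketch leaves implicit: continuity of the conditional CDFs from the joint density, and an explicit jointly measurable version of $F_k$. Two small refinements: that version's exceptional set should also include the Lebesgue-null set where the marginal density is infinite, and with $y_u=\sup\{y:G(y)\le u\}$ the identity $\{G(Y)\le u\}=\{Y\le y_u\}$ holds exactly, not just up to a null set.
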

\begin{proof}[Proof sketch]
By the probability integral transform, conditionally on the prefix
$X_{\sigma(1)}$, \dots, $X_{\sigma(k-1)}$ the variable
$U_k=F_k^\sigma(X_{\sigma(k)}\mid\cdot)$ is $\Unif[0,1]$ and hence independent of the prefix.
Therefore $U_k$ is independent of $U_1,\dots,U_{k-1}$ (which are functions of the prefix) and
uniform; induction over $k$ gives joint independence and uniformity.
\end{proof}

In words: take the coordinates one at a time in the chosen order; for each, ask ``given everything
seen so far, at which quantile of its predicted conditional distribution did this coordinate land?''.
If the model is right, those quantile positions are pure noise---independent uniforms; any structure
in them is evidence against the model. Theorem~\ref{thm:rosen} is the engine: it converts
``$\X\sim F$?'' into ``are the $U_k$ i.i.d.\ uniform?''. The map $R_\sigma$ is a measurable bijection
(a triangular, Knothe--Rosenblatt map) whenever the conditionals are strictly increasing, a fact we
use in Section~\ref{sec:misspec}.

\subsection{The Gaussian case: Cholesky whitening}\label{sec:gaussrosen}

Let $F=\mathcal N(\mathbf 0,\Sig)$. Let $\Sig_\sigma=P_\sigma\Sig P_\sigma^\top$ be the permuted
covariance and $L_\sigma$ its lower-triangular Cholesky factor, $\Sig_\sigma=L_\sigma L_\sigma^\top$.
Define the whitened scores
\begin{equation}\label{eq:whiten}
\z^\sigma \;=\; L_\sigma^{-1}\,P_\sigma\X \;=\; L_\sigma^{-1}\X_\sigma .
\end{equation}
Because $L_\sigma$ is lower triangular, the $k$-th entry $z^\sigma_k$ is an affine function of
$x_{\sigma(1)},\dots,x_{\sigma(k)}$, and a direct computation identifies it with the standardized
Gaussian conditional:
\[
z^\sigma_k=\frac{x_{\sigma(k)}-\mu_{k\mid<k}(\x)}{s_{k\mid<k}},\qquad
F_k^\sigma\!\big(x_{\sigma(k)}\mid x_{\sigma(1)},\dots,x_{\sigma(k-1)}\big)=\Phi(z^\sigma_k),
\]
where $\mu_{k\mid<k}$ and $s_{k\mid<k}$ are the conditional mean and standard deviation of
$X_{\sigma(k)}$ given the prefix. Hence the Gaussian Rosenblatt map is
$U^\sigma_k=\Phi(z^\sigma_k)$, and under $H_0$, $\z^\sigma\sim\mathcal N(\mathbf 0,I)$. We work
directly with the scores $\z^\sigma$ (equivalently the normal scores $\Phi^{-1}(U^\sigma_k)=z^\sigma_k$).
In plain terms, \emph{whitening} means transforming correlated coordinates into standardized,
uncorrelated ones: each score $z^\sigma_k$ answers ``by how many (conditional) standard deviations
did coordinate $\sigma(k)$ miss the value its predecessors predicted for it?''---so under the model
every score should look like a fresh standard normal draw.

\begin{example}[Why the ordering matters]\label{ex:bivar}
Take $n=2$, $\Sig=\left(\begin{smallmatrix}1&\rho\\\rho&1\end{smallmatrix}\right)$ with $\rho=0.9$, and
observe the single point $\x=(2,0)$. The two orderings give:
\[
\begin{aligned}
\text{order }(1,2):\quad
&\z=\Big(x_1,\ \tfrac{x_2-\rho x_1}{\sqrt{1-\rho^2}}\Big)=(2.00,\,-4.13),\\
&\U=\Phi(\z)=(0.977,\ 1.8\times10^{-5});\\
\text{order }(2,1):\quad
&\z=\Big(x_2,\ \tfrac{x_1-\rho x_2}{\sqrt{1-\rho^2}}\Big)=(0.00,\,+4.59),\\
&\U=\Phi(\z)=(0.500,\ 0.999998).
\end{aligned}
\]
The same observation produces \emph{entirely different} transforms---and not merely a relabeling: the
multisets $\{0.977,\,1.8\times10^{-5}\}$ and $\{0.5,\,0.999998\}$ differ, so even a
permutation-symmetric combination such as Simes depends on the ordering.

\emph{Whitening view.} Both transforms apply the same triangular factor
\[
L^{-1}=\begin{pmatrix}1&0\\[2pt]-\rho/\sqrt{1-\rho^2}&1/\sqrt{1-\rho^2}\end{pmatrix}
\]
to the \emph{ordered} data $\X_\sigma$---here $\Sig$ is exchangeable, so $L_\sigma$ is unchanged and only
$\X_\sigma$ is permuted. Triangularity makes the slot decisive (slot one passes through; slot two is the
residual after regressing it out), so the swap exchanges the roles $X_2\mid X_1$ and $X_1\mid X_2$. (For
non-exchangeable $\Sig$, $\Sig_\sigma$ and hence $L_\sigma$ change with the ordering too.)

The effect is interpretable.
Under strong positive correlation, $\x=(2,0)$ is surprising because $X_2$ ``should'' track $X_1$.
Order $(1,2)$ charges that surprise to the conditional $X_2\mid X_1$ (observed $0$ versus expected
$\rho x_1=1.8$), producing a tiny $U_2$; order $(2,1)$ charges it to $X_1\mid X_2$ (observed $2$
versus expected $0$), producing a $U_2$ near one. The surprise has the \emph{same magnitude}---indeed
$\|\z\|^2=\x^\top\Sig^{-1}\x=21.05$ under both orderings (Lemma~\ref{lem:energy})---but it surfaces in
\emph{opposite tails}. A one-sided test that flags small uniforms detects it under order $(1,2)$ yet
is blind to it under order $(2,1)$; a two-sided test catches both. This single point already exhibits
the paper's two levers: reordering reshuffles a fixed amount of signal across coordinates and tails
(the ``orbit'' of Section~\ref{sec:theory}), and the base test must be two-sided
(Section~\ref{sec:misspec}) to see it whichever way it lands.
\end{example}

\begin{remark}[Why the \emph{triangular} factor]\label{rem:tri}
The whitening matrix is not unique: any $W$ with $W\Sig_\sigma W^\top=I$ yields i.i.d.\ scores, and two
such factors differ by an orthogonal rotation. Rosenblatt singles out the lower-triangular Cholesky
factor (unique with positive diagonal) precisely because triangularity \emph{is} the chain of
conditionals: $z^\sigma_k$ standardizes $X_{\sigma(k)}$ given its predecessors. This is what makes
ordering matter---Cholesky does not commute with permutation, $\mathrm{chol}(P_\sigma\Sig P_\sigma^\top)\neq
P_\sigma\,\mathrm{chol}(\Sig)$. A symmetric root gives instead the whitening matrix
$W_\sigma=\Sig^{-1/2}P_\sigma^\top$ (its inverse $P_\sigma\Sig^{1/2}$ is the corresponding
\emph{non-triangular} covariance factor of $\Sig_\sigma$), whose scores
$W_\sigma\X_\sigma=\Sig^{-1/2}\X$ are \emph{order-invariant}---a
legitimate test, but a global mix rather than a chain of conditionals; for general
non-Gaussian $F$ a linear covariance-root transformation does not produce independent uniform
coordinates, and the nonlinear alternatives (optimal-transport maps to a reference law) require
additional construction, whereas the sequential conditional CDFs are always available. Order-invariance is no free lunch here: by
Lemma~\ref{lem:energy} the symmetric whitening is a single fixed profile on the energy sphere---one
point, not the orbit that pooling exploits.
\end{remark}

\subsection{Combining uniforms: Simes, e-values, BH and e-BH}

\begin{definition}[$p$-value, e-value]
A statistic $T\ge 0$ is a $p$-\emph{value} if $\Prob_{H_0}(T\le t)\le t$ for all $t\in[0,1]$, and an
$e$-\emph{value} if $\E_{H_0}[T]\le 1$. By Markov's inequality, if $E$ is an e-value then
$E\ge 1/\alpha$ is a level-$\alpha$ test, and $1/E$ (capped at $1$) is a $p$-value.
\end{definition}

\begin{definition}[Simes statistic]
For per-coordinate $p$-values $\mathbf p=(p_1,\dots,p_n)$, the Simes combination is
$S(\mathbf p)=\min_{1\le i\le n} n\,p_{(i)}/i$. If the $p_i$ are i.i.d.\ $\Unif[0,1]$ then
$S(\mathbf p)\sim\Unif[0,1]$ \cite{simes1986}; thus $S$ is itself a $p$-value under independence,
exactly the situation produced by Theorem~\ref{thm:rosen}.
\end{definition}

Simes scans all sparsity levels at once: the $i=1$ term $n\,p_{(1)}$ asks ``is the single most
extreme coordinate surprising after a multiplicity correction?'', the $i=n$ term asks ``are all
coordinates a bit off?'', and the minimum takes the strongest of these views---so one statistic is
sensitive both to one large departure and to many small ones.

\begin{lemma}[Merging rules]\label{lem:merge}
Let $E_1,\dots,E_M$ be e-values and $p_1,\dots,p_M$ be $p$-values, with \emph{arbitrary} dependence.
Then: (i) the average $\frac1M\sum_m E_m$ is an e-value \cite{vovkwang2021}; (ii) twice the average
$\min(1,\frac2M\sum_m p_m)$ is a $p$-value (arithmetic-mean merging, \cite{vovkwang2020}); and
(iii) $\min(1, M\min_m p_m)$ is a $p$-value (Bonferroni / union bound).
\end{lemma}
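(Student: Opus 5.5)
We prove the three parts separately. Each uses only the marginal validity of the individual e-values or $p$-values, so the arbitrary dependence never enters.

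\emph{Part (i).} We use linearity of expectation under $H_0$. Since each $E_m\ge0$, the average $\bar E=\frac1M\sum_m E_m$ is nonnegative, and
\[
\E_{H_0}[\bar E]=\tfrac1M\textstyle\sum_m\E_{H_0}[E_m]\le 1 .
\]
Linearity holds for any joint law, so dependence is irrelevant. Hence $\bar E$ is an e-value.

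\emph{Part (iii).} This is the union bound. For $t\in[0,1)$,
\[
\Prob_{H_0}\big(M\min_m p_m\le t\big)\le\textstyle\sum_m\Prob_{H_0}(p_m\le t/M)\le M\cdot t/M=t .
\]
Capping at $1$ only matters for $t\ge1$, where the requirement is trivial. Hence $\min(1,M\min_m p_m)$ is a $p$-value.

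\emph{Part (ii).} Here we reduce to Markov's inequality, which is the one step needing care. Fix $t\in(0,1)$ and set $\bar p=\frac1M\sum_m p_m$. The event $\{2\bar p\le t\}$ is contained in $\{\frac1M\sum_m \mathbf 1\{p_m\le t\}\ge\frac12\}$. The reason is a deterministic counting argument: if fewer than half of the $p_m$ were $\le t$, then more than half would exceed $t$, giving $\bar p> t/2$.

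Markov's inequality applied to the nonnegative variable $N=\frac1M\sum_m\mathbf 1\{p_m\le t\}$, whose mean is at most $t$ by validity of each $p_m$, then gives
\[
\Prob_{H_0}(N\ge\tfrac12)\le 2t .
\]
This is only $2t$, a factor of two too weak, so the naive counting argument must be sharpened.

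To sharpen it, I would use the standard Vovk--Wang route of calibrating $p$-values to e-values. The function $f_t(p)=\frac{2}{t}(1-\frac{p}{t})_+$ satisfies $\int_0^1 f_t\le 1$ and is decreasing. Hence $f_t(p_m)$ is an e-value, because for a valid (super-uniform) $p$-value and a decreasing $f$, $\E f(p)\le\int_0^1 f$. By part (i), the average $\frac1M\sum_m f_t(p_m)$ is an e-value.

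Convexity of $f_t$ together with Jensen's inequality gives
\[
\tfrac1M\textstyle\sum_m f_t(p_m)\ge f_t(\bar p),
\]
and on the event $\{\bar p\le t/2\}$ we have $f_t(\bar p)\ge \frac{2}{t}\cdot\frac12=\frac1t$. Markov's inequality for e-values then yields
\[
\Prob_{H_0}(2\bar p\le t)\le t ,
\]
which is exactly the claim. This calibrator argument replaces the lossy counting step and is the main point to get right. Alternatively, one could simply cite the sharp result of \cite{vovkwang2020}.
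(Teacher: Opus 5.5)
Your proof is correct. Parts (i) and (iii) match the paper's arguments exactly: linearity of expectation, and the union bound at threshold $t/M$.

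For (ii) your route genuinely differs. The paper gives no argument there. It invokes the arithmetic-mean ($r=1$) member of Vovk and Wang's generalized-mean merging family, and adds that the factor $2$ cannot be improved. You derive (ii) from part (i) instead, in four steps:
\begin{enumerate}
\item Calibrate each $p_m$ into the e-value $f_t(p_m)$ using the convex, decreasing calibrator $f_t(p)=\frac2t(1-p/t)_+$, which integrates to $1$ on $[0,1]$ for $t\le1$.
\item Average these e-values, which is valid by part (i).
\item Pull $f_t$ inside the average by Jensen.
\item Apply Markov's inequality at level $1/t$, using $\{\bar p\le t/2\}\subseteq\{f_t(\bar p)\ge 1/t\}$.
\end{enumerate}
Every step checks.

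Your route is self-contained, and it makes explicit a link the paper leaves implicit: the $p$-merge is just Markov's inequality applied to an averaged, calibrated e-value. So (ii) becomes a corollary of (i), which fits the paper's e-value viewpoint. The citation, in turn, supplies the sharpness of the constant $2$ and the rest of the generalized-mean family, which your argument does not address. Sharpness is not part of the lemma as stated, so nothing is missing.

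The counting-plus-Markov detour that only reaches $2t$ can simply be deleted.
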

\begin{proof}
(i) $\E_{H_0}[\frac1M\sum_m E_m]=\frac1M\sum_m\E_{H_0}[E_m]\le 1$ by linearity, with no independence
needed. (iii) $\Prob_{H_0}(\min_m p_m\le \alpha/M)\le\sum_m\Prob(p_m\le\alpha/M)\le M\cdot\alpha/M=\alpha$.
(ii) is the arithmetic-mean member ($r=1$) of the generalized-mean merging family of
\cite{vovkwang2020}, where it is shown that twice the average of arbitrarily dependent $p$-values is
again a $p$-value, and that the factor $2$ cannot be improved in general.
\end{proof}

Finally, when \emph{many} hypotheses are tested at once (Sections~\ref{sec:fdr} and~\ref{sec:fx}
screen thousands of days), controlling the per-test false-alarm rate is not enough: with $N=1{,}000$
true nulls at $\alpha=0.05$ one expects $50$ false alarms. The standard remedy controls the
\emph{false discovery rate} (FDR): the expected fraction of the rejected hypotheses that are false
alarms.

\begin{definition}[BH and e-BH]
Given $p$-values $p_1,\dots,p_N$ for $N$ hypotheses and target FDR $q$, the Benjamini--Hochberg
procedure \cite{bh1995} rejects the $k^\ast$ hypotheses with smallest $p$, where
$k^\ast=\max\{k: p_{(k)}\le qk/N\}$; it controls FDR under independence or positive dependence
(PRDS). Given e-values $E_1,\dots,E_N$, the \emph{e-BH} procedure \cite{wangramdas2022} rejects the
$k^\ast$ with largest $E$, where $k^\ast=\max\{k: E_{(k)}\ge N/(qk)\}$ ($E_{(1)}\ge\dots\ge E_{(N)}$);
it controls FDR at $q$ under \emph{arbitrary} dependence among the $E_i$, with no correction.
\end{definition}

\subsection{Order-invariant baselines}\label{sec:baselines}

The ordering problem has an obvious response: do not order. Two standard tests take it, and they
are the references against which everything in this paper is measured, so we fix them here. Both
are stated for the Gaussian null $\mathcal N(\mathbf 0,\Sig)$ of \S\ref{sec:gaussrosen}; both are
invariant to relabelling the coordinates; and both are carried through every experiment below,
bootstrap-calibrated by the same Procedure~1 as the Rosenblatt-based tests, so that all comparisons
are made at matched size.

\begin{definition}[$\chi^2$ energy test]\label{def:chi2}
Reject for large values of the quadratic form
\begin{equation}\label{eq:chi2}
T_{\chi^2}(\X) \;=\; \X^\top\Sig^{-1}\X \;\sim\; \chi^2_n \quad\text{under } H_0 .
\end{equation}
This is the Mahalanobis norm of $\X$, the classical omnibus test of a Gaussian null. It is order
invariant because the quadratic form does not see the labels at all.
\end{definition}

\begin{definition}[Symmetric-root test]\label{def:sym}
Whiten with the symmetric square root of Remark~\ref{rem:tri}, $\mathbf z=\Sig^{-1/2}\X$, and apply
the two-sided Simes statistic of \S\ref{sec:base} to the per-coordinate $p$-values
$p_j=2\Phi(-|z_j|)$. Order invariance holds because the symmetric root commutes with permutation,
$(P_\sigma\Sig P_\sigma^\top)^{-1/2}=P_\sigma\Sig^{-1/2}P_\sigma^\top$, which is exactly what the
triangular Cholesky factor fails to do.
\end{definition}

The two differ in what they are sensitive to, and this will matter throughout. The $\chi^2$ test
aggregates the whitened coordinates by summing their squares, so it spreads its $n$ degrees of
freedom evenly: its power is the same for every shape of departure at fixed energy, which makes
it competitive only where the shape-sensitive tests do badly, on diffuse departures; the symmetric-root test aggregates by
a min-type rule, so it is at its best when a departure is concentrated in few coordinates. Neither
is uniformly preferable, and \S\ref{sec:headline} shows that the ranking between them inverts
between the two regimes.

There is a second, sharper reason to keep $T_{\chi^2}$ in view. By Lemma~\ref{lem:energy} the total
Mahalanobis signal energy is conserved under reordering, and \eqref{eq:chi2} is a function of that
energy alone. Proposition~\ref{prop:chi2} makes the consequence precise: the $\chi^2$ test is
\emph{orbit-flat}, its power identical under every ordering. That is simultaneously its guarantee
and its ceiling, and it is the cleanest way to see what order randomization can and cannot buy---a
test that already ignores the ordering has nothing to gain from averaging over it.

Symmetrizing a Rosenblatt-based statistic to remove the ordering is the route taken by Dovern and
Manner \cite{dovern2020} for multivariate density-forecast calibration. The present paper reaches
the same symmetry by a different road: rather than removing the ordering analytically, it draws
many orderings and pools the resulting evidence---whose $M\to\infty$ average is itself a
symmetrization (Section~\ref{sec:disc})---so the difference is \emph{what} is aggregated and
with what finite-$M$ guarantee, not aggregation versus symmetrization.

\section{Method: an order-randomized test for one multivariate hypothesis}\label{sec:method}

The construction is short: whiten $\X$ under each of $M$ random orderings, score each whitening, and
pool the scores. Two design choices in it are not arbitrary but are the two things that, by the
introduction's account, have to be right for pooling to pay off. We state them here and foreshadow
their justification: the base statistic is \emph{two-sided} (Section~\ref{sec:misspec} shows a
one-sided one is blind to half the signal), and the pooled test is \emph{bootstrap-calibrated}
(Section~\ref{sec:exp} shows this is both necessary for validity and the step that uncovers the power
gain). With those flagged, everything below is routine.

\subsection{Two-sided base statistics}\label{sec:base}

Given the scores $\z^\sigma\sim\mathcal N(\mathbf 0,I)$ under $H_0$, we form a base statistic that is
sensitive to departures in \emph{either} tail (the reason is the directional blind spot of
Section~\ref{sec:misspec}). The toolkit has three base statistics, differing in the signal
\emph{shape} they are tuned to: Simes and the mixture e-value for concentrated (sparse) departures,
Fisher for diffuse (dense) ones.

\paragraph{Two-sided Simes}
Per-coordinate two-sided $p$-values $p^\sigma_i=2\,\Phi(-|z^\sigma_i|)$, combined by
$S(\mathbf p^\sigma)=\min_i n\,p^\sigma_{(i)}/i$. Under $H_0$ this is a $p$-value.

\paragraph{Two-sided mixture e-value}
With a small grid $\mathcal T\subset(0,\infty)$ of ``bet sizes,''
\begin{equation}\label{eq:evalue}
E^\sigma=\frac1n\sum_{k=1}^n\frac1{|\mathcal T|}\sum_{\tau\in\mathcal T} e_\tau(z^\sigma_k),
\qquad e_\tau(z):=e^{-\tau^2/2}\cosh(\tau z).
\end{equation}

\begin{lemma}[Validity of the mixture e-value]\label{lem:evalue}
$\E_{\mathcal N(0,1)}[e_\tau(Z)]=1$ for every $\tau$, and hence $\E_{H_0}[E^\sigma]=1$.
\end{lemma}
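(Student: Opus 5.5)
The plan is to compute the expectation of a single bet $e_\tau(Z)$ directly from the Gaussian moment generating function, and then pass to $E^\sigma$ by linearity. The only input needed from earlier is that under $H_0$ the whitened scores satisfy $\z^\sigma\sim\mathcal N(\mathbf 0,I)$ (\S\ref{sec:gaussrosen}, equivalently Theorem~\ref{thm:rosen} composed with $\Phi^{-1}$). In particular, each $z^\sigma_k$ is marginally $\mathcal N(0,1)$.

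First fix $\tau>0$ and let $Z\sim\mathcal N(0,1)$. I would write $\cosh(\tau Z)=\tfrac12\big(e^{\tau Z}+e^{-\tau Z}\big)$ and use the moment generating function $\E[e^{tZ}]=e^{t^2/2}$, which is finite for every real $t$. Evaluating at $t=\pm\tau$ gives $\E[\cosh(\tau Z)]=e^{\tau^2/2}$. Multiplying by the normalizing factor $e^{-\tau^2/2}$ then yields $\E[e_\tau(Z)]=1$. Since $e_\tau\ge 0$, this also shows that $e_\tau(Z)$ is a legitimate e-value. An equivalent check is to complete the square in $\int e^{\pm\tau z}\varphi(z)\,dz$, which recognizes $e^{-\tau^2/2}e^{\pm\tau z}\varphi(z)=\varphi(z\mp\tau)$ as a shifted normal density integrating to one.

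Second, I would apply linearity of expectation to \eqref{eq:evalue}. Each summand $e_\tau(z^\sigma_k)$ depends only on the single coordinate $z^\sigma_k$, whose marginal law under $H_0$ is $\mathcal N(0,1)$. Each summand therefore has expectation $1$, and the double average of $n|\mathcal T|$ such terms, with weights summing to one, has expectation exactly $1$. Independence across $k$ is not needed here, only the marginals. This matches the spirit of Lemma~\ref{lem:merge}(i) and is why the same argument survives pooling over orderings.

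No step is a real obstacle. The only point needing care is the justification for exchanging expectation and the finite sums, which is trivial because the sums are finite and every term is nonnegative and integrable by the finiteness of the Gaussian moment generating function. I would also remark that the grid $\mathcal T$ must be fixed in advance, independently of $\X$, so that the weights are deterministic. This is what turns the bound $\le 1$ into the equality $=1$.
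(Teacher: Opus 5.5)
Your proposal is correct and is essentially the paper's proof: write $\cosh(\tau Z)=\tfrac12(e^{\tau Z}+e^{-\tau Z})$, use the Gaussian moment generating function to get $\E[\cosh(\tau Z)]=e^{\tau^2/2}$, and then average unit-mean terms over $k$ and $\tau\in\mathcal T$ by linearity. Your further observations (that only the $\mathcal N(0,1)$ marginals of the $z^\sigma_k$ are used, and the completing-the-square check) are accurate, but the argument does not need them.
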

\begin{proof}
$\E[e_\tau(Z)]=e^{-\tau^2/2}\,\E[\cosh(\tau Z)]=e^{-\tau^2/2}\cdot\tfrac12(\E[e^{\tau Z}]+\E[e^{-\tau Z}])
=e^{-\tau^2/2}\cdot\tfrac12(e^{\tau^2/2}+e^{\tau^2/2})=1$, using the Gaussian MGF
$\E[e^{tZ}]=e^{t^2/2}$. Averaging unit-mean terms over $k$ and $\tau$ preserves unit mean.
\end{proof}

One caveat must be stated plainly, because everything in Sections~\ref{sec:calib}
and~\ref{sec:fdr} turns on it. Lemma~\ref{lem:evalue} requires the scores to be
\emph{exactly} standard normal, which they are when the null---the distribution being
tested---is fully specified, as when one tests an issued forecast (Section~\ref{sec:fx}).
It is \emph{not} preserved by plugging in an estimated covariance: if the null is an unknown
Gaussian population whose $\Sig$ is estimated from a reference sample, the plug-in scores are
heavier-tailed than normal and $E^\sigma$ need not be an e-value at all. The one-dimensional
case makes the failure vivid: with an independent unregularized variance estimate,
$Z=X/\hat\sigma$ is Student-$t$, and $\E[e^{-\tau^2/2}\cosh(\tau Z)]=\infty$ for every
$\tau>0$, since the $t$ distribution has no moment generating function. Under an estimated
null the mixture statistic is therefore used only as a \emph{test statistic}, calibrated into
a $p$-value by Procedure~1; the e-value reading, and everything that relies on it (Markov
thresholds, e-BH), is reserved for fully specified nulls.

The statistic $e_\tau$ is the likelihood ratio of $\mathcal N(0,1)$ against the symmetric mixture
$\tfrac12\mathcal N(\tau,1)+\tfrac12\mathcal N(-\tau,1)$; \eqref{eq:evalue} bets on a sparse,
sign-agnostic mean departure. The ``bet size'' $\tau$ sets the stakes: $e_\tau(z)$ is
increasing in $|z|$ for every $\tau$, growing like $e^{\tau|z|}$, but it starts from
$e_\tau(0)=e^{-\tau^2/2}$, so a large $\tau$ pays off faster on a large departure and loses more
on a null coordinate ($\E_{\mathcal N(0,1)}\log e_\tau(Z)\approx-0.13$, $-0.94$, $-2.7$ at
$\tau=1,2,3$). Averaging over the small grid $\mathcal T=\{1,2,3\}$ hedges between the two without
tuning to an unknown truth. The coordinate \emph{mixture} (a sum over $k$, not a product) is
the other design choice: a product $\prod_k e_\tau(z^\sigma_k)$ pays that ``betting tax'' on every
noise coordinate, and under a sparse alternative the few signal coordinates must repay $n-1$ of
them; for moderate departures and the larger bet sizes they do not (at $n=10$ and
$\mathrm{ncp}=12$, in the ordering that whitens the shifted coordinate last so that the whitened
profile is $(0,\dots,0,\sqrt{12})$, the $\tau=3$ product has expected log-value
$5.2-9\times2.7<0$, and it is negative under every ordering of that design), whereas in the
mixture a noise coordinate merely dilutes the evidence.

\paragraph{Two-sided Fisher}
Fisher's combination on the two-sided $p$-values,
\[
p^\sigma_F \;=\; \bar F_{\chi^2_{2n}}\!\Big(-2\sum_{k=1}^n \log p^\sigma_k\Big),
\]
where $\bar F_{\chi^2_{2n}}$ is the $\chi^2_{2n}$ survival function. Under $H_0$ the $p^\sigma_k$
are i.i.d.\ uniform, so $p^\sigma_F$ is an exact $p$-value.

\begin{remark}[Where Fisher fits---and where it cannot go]\label{rem:fisher}
Fisher aggregates \emph{every} coordinate, which makes it the dense-friendly member of the toolkit:
strong when the departure is spread across many coordinates, weak when it is concentrated
(empirically the weakest method in the study against sparse alternatives, behind even the $\chi^2$
energy test, while for dense alternatives it is competitive; Figure~\ref{fig:power}).
On \emph{one-sided} $p$-values it would inherit the directional blind spot of
Section~\ref{sec:misspec} in full, so the two-sided form above is essential. One boundary is
non-negotiable: Fisher's $\chi^2$ null law requires \emph{independent} $p$-values, so it may only be
used \emph{within} an ordering---never to merge across orderings, where the per-ordering statistics
are strongly dependent recomputations of one observation. The Fisher $p$-values
$p^{\sigma_m}_F$ can be merged across orderings with the dependence-robust rules of
Lemma~\ref{lem:merge}, and we ran both aggregation bases through every pooled combiner of
\S\ref{sec:combine} in every configuration of \S\ref{sec:exp}. The outcome is one-sided enough to
settle the design. Pooling adds almost nothing to Fisher
itself---the calibrated arithmetic-mean merge over $M=12$ orderings matches single-ordering Fisher
to within $0.005$, and the Bonferroni merge gains at most $+0.02$---so the Fisher-based combiners
inherit Fisher's shape profile: against every sparse departure they trail the corresponding
Simes-based combiner by $0.04$--$0.26$, and on the dense departure, Fisher's home ground, the
Simes-based Bonferroni still leads at known $F$ ($0.691$ vs.\ $0.642$) while the Fisher-based
$p$-merge edges its Simes counterpart ($0.623$ vs.\ $0.577$), and under heavy estimation
($\Ntr=4n$) both Fisher-based combiners lead there ($0.502$ vs.\ $0.415$ and $0.498$ vs.\
$0.421$). We therefore
aggregate only the Simes base and report Fisher as a single-ordering test. The mechanism is the
orbit theory. A sum over all coordinates is only weakly profile-sensitive---much closer to the
orbit-flat $\chi^2$ of Proposition~\ref{prop:chi2} than to the min-type Simes---so its per-ordering
values barely vary, and there is
nothing for pooling to average over---the same mechanism that makes the Jensen gap of
Proposition~\ref{prop:main} small for a profile-insensitive e-value, offered here as an
interpretation: the proposition concerns e-value averaging, whereas the Fisher $p$-values are
merged through their mean or minimum, so it does not itself deliver this conclusion. The pooling gain is a property of \emph{profile-sensitive} base
statistics, and Fisher is the demonstration of the converse.
\end{remark}

\subsection{Combining across orderings}\label{sec:combine}

Draw $\sigma_1,\dots,\sigma_M$ i.i.d.\ uniform on $S_n$, independent of the data. The per-ordering
statistics are recomputations on the \emph{same} $\X$ and are therefore strongly dependent;
Lemma~\ref{lem:merge} lets us combine them regardless. Write $P^\sigma$ for a generic per-ordering
base $p$-value---either the two-sided Simes $S(\mathbf p^\sigma)$ or the two-sided Fisher
$p^\sigma_F$ of \S\ref{sec:base}. We consider:
\begin{itemize}
\item \textbf{e-value averaging:} $\bar E_M=\frac1M\sum_{m=1}^M E^{\sigma_m}$; reject if
$\bar E_M\ge 1/\alpha$.
\item \textbf{$p$-merging:} $P^{\mathrm{merge}}_M=\min(1,\frac2M\sum_{m}P^{\sigma_m})$.
\item \textbf{Bonferroni-over-orders:} $P^{\mathrm{Bonf}}_M=\min(1,M\min_m P^{\sigma_m})$.
\end{itemize}
The capped forms are the nominal merged $p$-values. Whenever a combiner is
\emph{calibrated} (Procedure~1 below), the statistic
handed to the calibration is the uncapped summary---$\frac1M\sum_m P^{\sigma_m}$,
$\min_m P^{\sigma_m}$, or $\bar E_M$---of which the capped $p$-value is a monotone function below
the cap. The distinction matters under a nonrandomized rank rule: the cap places an atom at $1$, an
observation on the atom can never reject, and the calibrated test's rejection probability is
bounded by the null mass below the cap once that mass falls under $\alpha$ (with independent
coordinates and known $\Sig$ every ordering returns the same Simes value $P$, so at $M=128$
the capped Bonferroni statistic equals $1$ with probability $127/128$ and the test could reject
at most $0.8\%$ of the time). Capping is therefore reserved for reporting a nominal merged
$p$-value. (The single-test experiments and the application calibrate the uncapped summaries; the
screening experiments of Section~\ref{sec:fdr} calibrated the capped forms, which there gives
identical decisions, because at $M=12$ the null mass below the Bonferroni cap is at least $1/12$,
far above every threshold used.)
By Lemma~\ref{lem:merge}, $\bar E_M$ is an e-value and $P^{\mathrm{merge}}_M,P^{\mathrm{Bonf}}_M$ are
$p$-values under $H_0$ for \emph{every} $M$, any dependence among orderings, and either choice of
base $p$-value; randomizing the orderings independently of the data preserves these properties
(condition on $\sigma_{1:M}$ and average). The $M\to\infty$ limit of $\bar E_M$ is the
order-invariant symmetrized e-value $\E_\sigma[E^\sigma(\X)\mid\X]$; finite $M$ is a Monte-Carlo
approximation. We evaluated the pooled combiners on both bases; the Simes base won in every
sparse configuration at both training sizes, while on the dense departure the Fisher base leads
under heavy estimation and, at known $F$, for the $p$-merge (Remark~\ref{rem:fisher}), so
the experiments report Simes-based pooling and Fisher as a single-ordering test.

\subsection{Calibration under an estimated null}\label{sec:calib}

In practice $\Sig$ is unknown and estimated as $\hat\Sig$ from a reference sample of size $\Ntr$.
The plug-in scores $\hat\z^\sigma=\hat L_\sigma^{-1}\X_\sigma$ are then only approximately
$\mathcal N(\mathbf 0,I)$ under $H_0$, so the nominal thresholds of Section~\ref{sec:combine} are no
longer exact and the test is miscalibrated (Section~\ref{sec:exp}). We restore level by a
\emph{re-estimating parametric bootstrap}. The idea is simple: since we cannot compute the exact
null distribution of our statistic under estimation noise, we \emph{manufacture} it---simulate many
worlds in which the fitted model is true, run the \emph{entire} pipeline (including re-estimating the
covariance from a fresh training sample, so the simulated worlds carry the same estimation noise as
the real one) in each world, and use the simulated statistics' quantile as the rejection threshold.
This is the standard remedy for goodness-of-fit testing with estimated parameters
\cite{stute1993,genestremillard2008}, an instance of bootstrap calibration (prepivoting) in the sense
of \cite{beran1988}; see \cite{davisonhinkley1997} for parametric and Monte-Carlo bootstrap tests
generally.

\begin{procedure}{1: Bootstrap calibration of a combiner $T$}
Given the fitted model $\hat\Sig$, target level $\alpha$, orderings $\sigma_{1:M}$, replicate count
$B$:\\
1. For $b=1,\dots,B$: draw a fresh ``training'' sample of size $\Ntr$ from $\mathcal N(\mathbf 0,\hat\Sig)$
and re-estimate $\hat\Sig^{\ast}_b$; draw one calibration vector $\X^\ast_b\sim\mathcal N(\mathbf 0,\hat\Sig)$;
compute $T^\ast_b=T(\X^\ast_b;\hat\Sig^{\ast}_b)$ using the \emph{re-estimated} whitening.\\
2. Rank-based Monte-Carlo $p$-value: for a $p$-value combiner (small values are evidence),
$p_{\mathrm{MC}}=\bigl(1+\#\{b:T^\ast_b\le T_{\mathrm{obs}}\}\bigr)/(B+1)$; for an e-value
combiner (large values are evidence),
$p_{\mathrm{MC}}=\bigl(1+\#\{b:T^\ast_b\ge T_{\mathrm{obs}}\}\bigr)/(B+1)$. Reject if
$p_{\mathrm{MC}}\le\alpha$.
\end{procedure}

Two remarks on Procedure~1. First, on what is claimed. The observed statistic
$T_{\mathrm{obs}}=T(\X;\hat\Sig)$ whitens by the fixed fitted covariance while each replicate
$T^\ast_b=T(\X^\ast_b;\hat\Sig^\ast_b)$ whitens by a re-estimated one, so even under the fitted
null $\X\sim\mathcal N(\mathbf 0,\hat\Sig)$ the two are \emph{not} identically distributed and
$p_{\mathrm{MC}}$ is not an exact $p$-value for that null (in one dimension, a fixed-variance
score is $\mathcal N(0,1)$ while a re-estimated one is Student-$t$; calibrating the former against
the latter, two-sided at ten degrees of freedom, rejects with probability $0.026$, not $0.05$,
however large $B$). Nor is the fitted null the target. The re-estimating bootstrap aims at the
\emph{estimated-population} null: the fitted model stands in for the unknown population, and each
replicate re-enacts the whole pipeline---estimation from $\Ntr$ draws, then whitening of an
independent vector by that estimate---so that $\{T^\ast_b\}$ mimics the sampling distribution of
$T_{\mathrm{obs}}$ when both the training sample and $\X$ come from the population. That
approximation is asymptotic in $\Ntr$, under the regularity conditions of the parametric-bootstrap
literature \cite{stute1993,genestremillard2008,beran1988}; we claim no finite-sample exactness,
and Section~\ref{sec:calib-exp} measures how close the realized size comes at $\Ntr=4n$ and $8n$.
(The implementation adds a fixed ridge $10^{-3}I$ to every covariance estimate for numerical
stability; the asymptotic argument presumes a ridge that vanishes with $\Ntr$, and at the sizes
used its effect is negligible.)
The rank form of step~2 is a separate, finite-$B$ matter. When the simulated null is exact---a
fully specified $F$, where step~1 reduces to drawing $\X^\ast_b$ from $F$ and whitening it by the
same fixed factor as the data, the situation of the issued-forecast null in
Section~\ref{sec:fx}---$T_{\mathrm{obs}},T^\ast_1,\dots,T^\ast_B$ are exchangeable and
$p_{\mathrm{MC}}$ is an exact finite-sample $p$-value \cite{davisonhinkley1997}; an interpolated
empirical $\alpha$-quantile of $\{T^\ast_b\}$ is instead not exact in general at finite $B$,
and can be anti-conservative. For a
statistic that is uniform under the null, such as a $p$-value, the expected rejection probability
of the interpolated rule is $((B-1)\alpha+1)/(B+1)$, e.g.\ $0.0545$ at $B=199$ and $\alpha=0.05$
(for other continuous statistics the bias is of the same order but not distribution-free); it was
visible in earlier drafts of our own experiments and is removed by the rank rule, whose only cost
is the granularity $1/(B+1)$. Second, re-estimating
$\hat\Sig^\ast_b$ \emph{inside} each replicate is what makes the bootstrap null carry
the same estimation noise as the real pipeline; this is essential for the calibration to remove the
estimation-induced bias (Section~\ref{sec:exp}).

\subsection{The full procedure}

\begin{procedure}{2: Order-randomized Rosenblatt test (single statistic)}
Input: data vector $\X$; fitted $\hat\Sig$; level $\alpha$; $M$, $B$, base statistic, combiner.\\
1. Draw $\sigma_1,\dots,\sigma_M\stackrel{\text{i.i.d.}}{\sim}\Unif(S_n)$ and form
$\hat L_{\sigma_m}=\mathrm{chol}(P_{\sigma_m}\hat\Sig P_{\sigma_m}^\top)$.\\
2. Whiten: $\hat\z^{\sigma_m}=\hat L_{\sigma_m}^{-1}P_{\sigma_m}\X$; compute the base statistic
(\S\ref{sec:base}) for each $m$.\\
3. Combine across orderings (\S\ref{sec:combine}) to get $T_{\mathrm{obs}}$.\\
4. Calibrate the threshold by Procedure~1 and reject accordingly.
\end{procedure}

The procedure is for a single multivariate hypothesis. When a collection of such hypotheses is
screened simultaneously, the multiplicity layer depends on which null is in force: under a fully
specified null the per-hypothesis e-values $\bar E_M^{(j)}$ are valid and e-BH applies directly
(Section~\ref{sec:fx}); under an estimated null the per-hypothesis outputs are the calibrated
$p$-values of Procedure~1 and the layer is BH (Section~\ref{sec:fdr}). This layer is a standard add-on and is treated as a secondary
result in Section~\ref{sec:fdr}.

\section{Theory: power as a profile functional on the orbit of orderings}\label{sec:theory}

Why should pooling orderings help at all? Reordering shuffles the data through different whitenings,
but it cannot create or destroy information, so any gain must come from \emph{how} a fixed departure
is presented to the test, not from how much departure there is. This section makes that intuition
exact. We show that reordering moves the alternative around a sphere of fixed radius---the
\emph{orbit}---changing only the shape of the signal across coordinates (Lemma~\ref{lem:energy}, with
Proposition~\ref{prop:kl} as its general, non-Gaussian form: the transform's total information is
conserved under reordering); that
a pure energy test is therefore blind to ordering and cannot be helped by pooling
(Proposition~\ref{prop:chi2}); but that a shape-sensitive test sees a lucky or unlucky view under each
ordering, so averaging the views strictly improves the expected evidence by exactly the orbit's
variability---which is zero under independence (Proposition~\ref{prop:main}) and is large,
empirically, under strong dependence.
A short calculation then gives the optimal number of orderings for the max-type combiner
(Proposition~\ref{prop:optM}).

Throughout this section $\X\sim\mathcal N(\mut,\Sig)$ (the alternative), and we assume the
whitening uses the true $\Sig$ (the role of estimation is studied empirically in
Section~\ref{sec:exp}). Under ordering $\sigma$, $\z^\sigma\sim\mathcal N(\m^\sigma,I)$ with whitened
mean $\m^\sigma=L_\sigma^{-1}P_\sigma\mut$.

\begin{lemma}[Energy invariance]\label{lem:energy}
For every $\sigma\in S_n$, $\ \|\m^\sigma\|^2=\mut^\top\Sig^{-1}\mut$, and $\|\z^\sigma\|^2=\X^\top\Sig^{-1}\X$
is independent of $\sigma$.
\end{lemma}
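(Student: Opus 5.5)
The plan is a direct algebraic computation built on the identity $\Sig_\sigma^{-1}=P_\sigma\Sig^{-1}P_\sigma^\top$. For any vector $\mathbf v\in\R^n$ we have
\[
\|L_\sigma^{-1}P_\sigma\mathbf v\|^2=\mathbf v^\top P_\sigma^\top L_\sigma^{-\top}L_\sigma^{-1}P_\sigma\mathbf v=\mathbf v^\top P_\sigma^\top\Sig_\sigma^{-1}P_\sigma\mathbf v .
\]
Applied to $\mathbf v=\mut$, this gives the statement about $\m^\sigma$; applied to $\mathbf v=\X$, it gives the statement about $\z^\sigma$, since both are images of the same linear map $L_\sigma^{-1}P_\sigma$ of \eqref{eq:whiten}.

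The steps, in order, are as follows. First, from $\Sig_\sigma=L_\sigma L_\sigma^\top$ and the invertibility of $L_\sigma$ (its diagonal is positive because $\Sig$ is positive definite), we get $L_\sigma^{-\top}L_\sigma^{-1}=(L_\sigma L_\sigma^\top)^{-1}=\Sig_\sigma^{-1}$. Second, since a permutation matrix is orthogonal, $P_\sigma^{-1}=P_\sigma^\top$, and therefore
\[
\Sig_\sigma^{-1}=(P_\sigma\Sig P_\sigma^\top)^{-1}=P_\sigma\Sig^{-1}P_\sigma^\top .
\]
Third, we substitute this into the displayed quadratic form. The factors $P_\sigma^\top P_\sigma=I$ cancel on both sides, leaving $\mathbf v^\top\Sig^{-1}\mathbf v$, which involves no $\sigma$ at all. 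This yields $\|\m^\sigma\|^2=\mut^\top\Sig^{-1}\mut$ and $\|\z^\sigma\|^2=\X^\top\Sig^{-1}\X$ for every $\sigma\in S_n$.

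There is no real obstacle. The only point needing care is to keep the Cholesky factor of the \emph{permuted} covariance separate from a permutation of the Cholesky factor of $\Sig$. Remark~\ref{rem:tri} stresses that these differ, so I will use only the defining relation $L_\sigma L_\sigma^\top=\Sig_\sigma$ and never any commutation of $\mathrm{chol}$ with $P_\sigma$. I would also note the interpretation for the reader: any two whitenings differ by an orthogonal rotation (Remark~\ref{rem:tri}), so the norm is invariant while the individual coordinates are not. This is exactly the ``orbit on a sphere'' picture used in the rest of the section. As a sanity check, Example~\ref{ex:bivar} gives $4+4.13^2\approx 0.59^{}\!\cdot 0+4.59^2\approx21.05=\x^\top\Sig^{-1}\x$.
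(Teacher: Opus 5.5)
Your proof is correct and is essentially the paper's own argument: both rewrite $\Sig_\sigma^{-1}=L_\sigma^{-\top}L_\sigma^{-1}$ and use $(P_\sigma\Sig P_\sigma^\top)^{-1}=P_\sigma\Sig^{-1}P_\sigma^\top$ to reduce the quadratic form to $\mathbf v^\top\Sig^{-1}\mathbf v$, first for $\mut$ and then for $\X$. The only blemish is the garbled sanity check, which should read $2^2+(-4.13)^2\approx 0^2+4.59^2\approx 21.05$.
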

\begin{proof}
$\Sig_\sigma=P_\sigma\Sig P_\sigma^\top=L_\sigma L_\sigma^\top$ gives
$L_\sigma^{-1}\Sig_\sigma L_\sigma^{-\top}=I$, hence $\Sig_\sigma^{-1}=L_\sigma^{-\top}L_\sigma^{-1}$.
Therefore
\[
\begin{aligned}
\|\m^\sigma\|^2
&=(P_\sigma\mut)^\top L_\sigma^{-\top}L_\sigma^{-1}(P_\sigma\mut)
=(P_\sigma\mut)^\top\Sig_\sigma^{-1}(P_\sigma\mut)\\
&=\mut^\top P_\sigma^\top(P_\sigma\Sig P_\sigma^\top)^{-1}P_\sigma\mut
=\mut^\top\Sig^{-1}\mut,
\end{aligned}
\]
using $P_\sigma^{-1}=P_\sigma^\top$. The same computation with $\X$ replacing $\mut$ gives the second
claim.
\end{proof}

Thus reordering is a \emph{norm-preserving} reparametrization: the family
$\{\m^\sigma:\sigma\in S_n\}$ lies on a sphere of fixed radius $\|\m\|=(\mut^\top\Sig^{-1}\mut)^{1/2}$,
which we call the \emph{orbit} of the alternative. What changes across $\sigma$ is the \emph{profile}
of $\m^\sigma$ (how the fixed energy is distributed across coordinates) and its sign pattern. The
quantity $\mut^\top\Sig^{-1}\mut$ is the squared \emph{Mahalanobis distance} of the alternative from
the null---the natural ``how far is the truth from the model'' scale that accounts for correlations;
we call it the signal's \emph{energy}. Lemma~\ref{lem:energy} says reordering can neither create nor
destroy energy; it only re-apportions it among the whitened coordinates. One ordering may pile the
energy onto a single coordinate (easy for a sparse-friendly test to see) while another smears the
same energy thinly over all coordinates (nearly invisible to it): same distance from the model,
very different visibility. The effect is large. With $n=50$ independent standard normal coordinates
and a mean departure of fixed total $\KL=4$, the two-sided Simes test of \S\ref{sec:base}
detects the departure $36\%$ of the time at $\alpha=0.05$ when the whole budget sits on one
coordinate and $10\%$ when it is spread evenly over all fifty (one-sided: $43\%$ and $17\%$)---equally informative alternatives, at the same distance from the null, differing only in
profile.

The energy lemma is the Gaussian instance of a fully general conservation law, which does not even
require the null to be Gaussian: however the coordinates are ordered, the transform's departure from
uniformity carries exactly the model's error, no more and no less.

\begin{proposition}[KL invariance]\label{prop:kl}
Suppose the truth is $G$ but the transform uses $F$ (both absolutely continuous, with the
conditional distribution functions of $F$ strictly increasing on their supports, so that
$R_\sigma^F$ is almost surely bijective), and let
$\U=R_\sigma^F(\X)$ for $\X\sim G$. Then $\KL(\mathcal L(\U)\,\|\,\Unif[0,1]^n)=\KL(G\,\|\,F)$, and in
particular this is invariant to $\sigma$.
\end{proposition}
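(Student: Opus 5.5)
The plan is to use the invariance of KL divergence under measurable bijections, applied to the Rosenblatt map $R_\sigma^F$. By Theorem~\ref{thm:rosen}, if $\X\sim F$ then $R_\sigma^F(\X)\sim\Unif[0,1]^n$. Hence the pushforward of $F$ under $R_\sigma^F$ is exactly the uniform law, while the pushforward of $G$ is $\mathcal L(\U)$ by definition. So the claim reduces to
\[
\KL\bigl(R_\sigma^F{}_\#G\,\|\,R_\sigma^F{}_\#F\bigr)=\KL(G\,\|\,F),
\]
that is, KL is preserved by the map.

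\textbf{Step 1 (data processing).} For any measurable map, pushforward can only decrease KL, so $\KL(\mathcal L(\U)\|\Unif)\le\KL(G\|F)$.

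\textbf{Step 2 (equality via invertibility).} Under the hypothesis that the conditionals of $F$ are strictly increasing, $R_\sigma^F$ is a measurable bijection on a set of full $F$-measure. Its inverse is the triangular Knothe map built from the conditional quantile functions, which is measurable. Applying data processing to this inverse gives the reverse inequality.

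One subtlety is that the bijection holds only on the support of $F$. If $G$ charges a set that $F$ does not, then $G\not\ll F$ and $\KL(G\|F)=\infty$. In that case I would check directly that $\mathcal L(\U)\not\ll\Unif$. The point is that the image under $R_\sigma^F$ of an $F$-null set is a Lebesgue-null set: this follows from the bijection together with the fact that the pushforward of $F$ is Lebesgue measure. One can also avoid a case split entirely by phrasing Step 2 through the Radon--Nikodym derivative. If $G\ll F$ with density $g/f$, then $\mathcal L(\U)$ has density $(g/f)\circ(R_\sigma^F)^{-1}$ with respect to $\Unif$. Substituting $\x=(R_\sigma^F)^{-1}(\mathbf u)$ in $\int\log(d\mathcal L(\U)/d\Unif)\,d\mathcal L(\U)$ then returns $\int\log(g/f)\,dG$.

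\textbf{Explicit alternative.} A more concrete route is the change of variables itself. The Jacobian of the triangular map is $\prod_k f_k^\sigma(x_{\sigma(k)}\mid\cdot)=f(\x)$ by the chain rule for densities. So the density of $\U$ at $R_\sigma^F(\x)$ is $g(\x)/f(\x)$, and the KL integral transforms exactly.

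Invariance in $\sigma$ is then immediate, because the right-hand side $\KL(G\|F)$ does not involve $\sigma$.

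\textbf{Main obstacle.} The hard part is the measure-theoretic bookkeeping in Step 2. This means making the almost-sure bijectivity rigorous, including the measurability of the inverse and the handling of boundary and null sets, and treating the case $\KL(G\|F)=\infty$. I would try the Radon--Nikodym formulation first, since it handles both the finite and infinite cases without differentiability assumptions on the conditionals.
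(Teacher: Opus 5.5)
Your proposal is correct and follows the paper's proof. Theorem~\ref{thm:rosen} identifies the pushforward of $F$ under $R_\sigma^F$ as $\Unif[0,1]^n$, and KL is invariant under a common almost-sure bijection; you unpack that invariance (two-way data processing, or the Radon--Nikodym substitution), whereas the paper simply cites it. Your care with the case $G\not\ll F$, which the paper's two-line argument passes over, is a genuine addition, with two caveats: the Radon--Nikodym route does not actually remove that case split, and where $G$ charges regions on which $R_\sigma^F$ is not injective (flat stretches of $F$'s conditional CDFs), the null-image claim rests on each flat stretch collapsing to a single value rather than on the bijection.
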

\begin{proof}
$R_\sigma^F$ is a measurable bijection and $R_\sigma^F(\X)\sim\Unif[0,1]^n$ when $\X\sim F$
(Theorem~\ref{thm:rosen}). KL divergence is invariant under a common bijective transformation of its
two arguments; applying $R_\sigma^F$ to $G$ gives $\mathcal L(\U)$ and to $F$ gives $\Unif[0,1]^n$,
so $\KL(\mathcal L(\U)\,\|\,\Unif)=\KL(G\,\|\,F)$. The right side does not involve $\sigma$.
\end{proof}

(The Kullback--Leibler divergence $\KL(G\,\|\,F)$ measures, in information units, how distinguishable
the truth $G$ is from the model $F$; it is the best possible average log-likelihood ratio, so it
bounds what \emph{any} test can achieve.) Together, Lemma~\ref{lem:energy} and
Proposition~\ref{prop:kl} say that ordering changes \emph{how} a fixed departure is presented to the
test---never how much departure there is. Any gain from pooling orderings must therefore come from
presentation, and any loss from a single ordering is a presentation failure. The next two results
make both directions precise.

\begin{proposition}[Orbit-flatness of energy tests]\label{prop:chi2}
Any test whose statistic depends on $\z^\sigma$ only through $\|\z^\sigma\|^2$ has power independent
of $\sigma$: the statistic is literally the same number under every ordering, so ordering, and any
combination of orderings applied to the statistic itself, changes nothing. In particular the
$\chi^2$ test based on $\X^\top\Sig^{-1}\X$ is unaffected by ordering, and its power under a
mean-shift alternative depends on the departure only through its energy
$\mut^\top\Sig^{-1}\mut$---it is \emph{flat} across departure shapes at fixed energy.
\end{proposition}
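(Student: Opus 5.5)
The plan has two parts: first the ordering-invariance, which follows from Lemma~\ref{lem:energy}, and then the distributional claim, which needs a noncentral $\chi^2$ computation.

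\emph{Ordering-invariance.} Suppose the statistic is $T=g(\|\z^\sigma\|^2)$ for a fixed measurable $g$. Lemma~\ref{lem:energy} gives $\|\z^\sigma\|^2=\X^\top\Sig^{-1}\X$ pointwise, for every realization of $\X$ and every $\sigma$. Hence $T$ is the same function of $\X$ under every ordering. Its rejection event $\{g(\X^\top\Sig^{-1}\X)\in\text{rejection region}\}$ is then literally the same set for every $\sigma$, so the power does not depend on $\sigma$.

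Now consider any combination of $T$ across orderings $\sigma_1,\dots,\sigma_M$: an average, a minimum, a Bonferroni or $p$-merge, or a calibration by Procedure~1 with known $\Sig$. Such a combination is a function of the $M$ identical numbers $T^{\sigma_1}=\dots=T^{\sigma_M}$. It is therefore a deterministic function of $\X^\top\Sig^{-1}\X$ alone, and no ordering information survives. I will note in passing that the merge may alter the nominal threshold (e.g.\ the factor $2$ or $M$). It cannot, however, alter the information the test uses, and after calibration the test coincides with the $\chi^2$ test itself.

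\emph{Distribution under the alternative.} Take $\X\sim\mathcal N(\mut,\Sig)$ and fix any ordering, say the identity. Then $\z=L^{-1}\X\sim\mathcal N(\m,I)$ with $\m=L^{-1}\mut$. So $\|\z\|^2$ is noncentral $\chi^2_n$ with noncentrality $\|\m\|^2$, and Lemma~\ref{lem:energy} gives $\|\m\|^2=\mut^\top\Sig^{-1}\mut$.

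The one step that needs care is that the law of $\|\z\|^2$ for $\z\sim\mathcal N(\m,I)$ depends on $\m$ only through $\|\m\|$. To show this, I choose an orthogonal $Q$ with $Q\m=\|\m\|\mathbf e_1$. Then $Q\z\sim\mathcal N(\|\m\|\mathbf e_1,I)$, by rotation invariance of the standard Gaussian, and $\|Q\z\|^2=\|\z\|^2$. It follows that $\X^\top\Sig^{-1}\X\sim\chi^2_n(\mut^\top\Sig^{-1}\mut)$.

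The rejection threshold is the $\chi^2_n$ quantile fixed under $H_0$, so the power is a function of the energy alone. Departures $\mut$ of different shapes (sparse or diffuse) with the same energy therefore have identical power; this is the flatness claim.

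There is no real obstacle here. The only subtlety worth stating is that the first claim is a pointwise identity in $\X$, not merely an equality in distribution. That is why combining orderings yields nothing, rather than only leaving the power unchanged in expectation.
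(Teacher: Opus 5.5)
Your proof is correct and follows the paper's own argument. Lemma~\ref{lem:energy} gives $\|\z^\sigma\|^2=\X^\top\Sig^{-1}\X$ pointwise, and the noncentral $\chi^2_n$ law with noncentrality $\mut^\top\Sig^{-1}\mut$ then gives flatness across shapes; your rotation step only spells out a standard fact the paper uses without comment. One small qualification concerns your aside that calibration recovers the $\chi^2$ test: this holds when the uncapped summary is calibrated, as the paper's remark after the proof notes, because with $M$ identical $p$-values the capped statistic $\min(1,M\min_m p)$ puts null mass $1-1/M$ on the atom at $1$.
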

\begin{proof}
By Lemma~\ref{lem:energy}, $\|\z^\sigma\|^2$ is the same for all $\sigma$; hence the statistic and
its distribution do not depend on $\sigma$, and under $\X\sim N(\mut,\Sig)$ the statistic is
noncentral $\chi^2_n$ with noncentrality $\mut^\top\Sig^{-1}\mut$, whatever the profile of $\mut$.
\end{proof}

(One clarification: this concerns combining the identical statistic. Pushing $M$ duplicated
copies through a \emph{penalized} nominal merger is a different operation and can only lose---
Bonferroni's $M\min_m p$ applied to $M$ equal $p$-values multiplies the $p$-value by $M$;
calibrated combination of the duplicates, on the uncapped summaries of \S\ref{sec:combine},
is exactly the original test.)

Profile-sensitive tests behave differently. The headline positive result, stated for the e-value
combiner, uses expected log-evidence (growth rate)---the canonical optimality criterion for e-values
\cite{grunwald2024}. In the betting reading, $\E_Q[\log E]$ is the long-run exponential growth rate
of repeatedly staking one's evidence capital on the bet $E$ when the truth is $Q$; maximizing it is
the evidential analogue of maximizing power, and it is the natural criterion here because log-evidence,
unlike fixed-level power, behaves cleanly under the averaging that pooling performs.

\begin{proposition}[Order-averaging dominates a single ordering]\label{prop:main}
Let $\sigma\sim\Unif(S_n)$ be drawn independently of $\X$, and for each $\sigma$ let $E^\sigma(\X)\ge0$
be an e-value, $\E_{H_0}[E^\sigma]=1$. Let $\bar E(\X)=\E_\sigma[E^\sigma(\X)\mid\X]$, and for
(ii) assume $E^\sigma(\X)>0$ $Q$-a.s.\ with $\E_Q|\log E^\sigma(\X)|<\infty$ for each $\sigma$
and $\E_Q|\log\bar E(\X)|<\infty$ (satisfied by the mixture e-value \eqref{eq:evalue}, which is
bounded below by $e^{-\max\mathcal T^2/2}/(n|\mathcal T|)>0$ and has Gaussian log-moments). Then:
\begin{enumerate}
\item[(i)] $\bar E$ is an e-value: $\E_{H_0}[\bar E]=1$.
\item[(ii)] for any alternative $Q$,
$\ \E_Q[\log\bar E(\X)]\ \ge\ \E_\sigma\,\E_Q[\log E^\sigma(\X)]$, with equality iff $E^\sigma(\X)$ is
$Q$-a.s.\ constant in $\sigma$;
\item[(iii)] if $\Sig$ is diagonal and the base e-value is exchangeable (permutation-symmetric in the
coordinatewise scores), then $E^\sigma(\X)$ is constant in $\sigma$ and (ii) holds with equality:
order combination yields no growth-rate gain.
\end{enumerate}
\end{proposition}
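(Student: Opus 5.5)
Part (i) follows from linearity and Fubini. Because $\sigma$ is drawn independently of $\X$ and each $E^\sigma\ge0$, Tonelli lets us exchange the order of integration:
\[
\E_{H_0}[\bar E(\X)]=\E_{H_0}\E_\sigma[E^\sigma(\X)]=\E_\sigma\E_{H_0}[E^\sigma(\X)]=\E_\sigma[1]=1 .
\]
Here $\E_\sigma$ is a finite average over the $n!$ orderings, so no measurability subtleties arise; this is the $M=n!$ (or population) version of Lemma~\ref{lem:merge}(i).

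For (ii), the plan is conditional Jensen applied to the strictly concave function $\log$. Fix $\X=\x$ and consider the random variable $E^\sigma(\x)$ over $\sigma\sim\Unif(S_n)$. Jensen gives
\[
\log\E_\sigma[E^\sigma(\x)]\ \ge\ \E_\sigma[\log E^\sigma(\x)] .
\]
Positivity $Q$-a.s.\ makes the logs finite, and since $\E_\sigma$ is a finite average, the right side is finite whenever each $\log E^\sigma(\x)$ is. We then take $\E_Q$ of both sides. The integrability assumptions $\E_Q|\log E^\sigma|<\infty$ and $\E_Q|\log\bar E|<\infty$ justify this step and the exchange $\E_Q\E_\sigma=\E_\sigma\E_Q$ on the right, which is a finite sum. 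The result is $\E_Q[\log\bar E]\ge\E_\sigma\E_Q[\log E^\sigma]$.

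The equality case needs the most care, and I will handle it by integrating a nonnegative gap. Define the pointwise Jensen gap $D(\x)=\log\E_\sigma E^\sigma(\x)-\E_\sigma\log E^\sigma(\x)\ge0$. Equality of the expectations means $\E_Q[D]=0$, hence $D=0$ $Q$-a.s. Because $\log$ is strictly concave and the law of $\sigma$ puts positive mass $1/n!$ on every ordering, $D(\x)=0$ holds iff $E^\sigma(\x)$ takes the same value for all $\sigma\in S_n$. This gives the ``only if'' direction; the converse is immediate, since a constant makes Jensen an equality. I would also note that the integrability hypotheses are exactly what make $\E_Q[D]$ a difference of finite quantities, so that $\E_Q[D]=0$ is meaningful. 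The parenthetical claim about the mixture e-value follows from two observations: each $e_\tau(z)\ge e^{-\tau^2/2}$, which gives the stated lower bound (after discarding all but one term of the double sum), and $\log E^\sigma\le\max_\tau(\tau|z|)$-type bounds, which yield Gaussian-integrable upper bounds on $|\log E^\sigma|$ and on $|\log\bar E|$.

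For (iii), if $\Sig=\mathrm{diag}(s_1^2,\dots,s_n^2)$ then $\Sig_\sigma=P_\sigma\Sig P_\sigma^\top$ is diagonal with entries $s_{\sigma(k)}^2$. Its Cholesky factor is therefore $L_\sigma=\mathrm{diag}(s_{\sigma(k)})$, by uniqueness with positive diagonal, and so $\z^\sigma=P_\sigma\z$ with $z_j=x_j/s_j$. Under this permutation the score vector is merely relabeled, so an exchangeable base e-value gives $E^\sigma(\X)=E(\z)$ for every $\sigma$. Constancy in $\sigma$ then yields equality in (ii) by the equality clause just proved. This step is mechanical; the only point to state explicitly is the Cholesky uniqueness that forces $L_\sigma$ to be the permuted diagonal.
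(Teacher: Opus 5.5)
Your proposal is correct and follows the same route as the paper: Fubini/Tonelli for (i), Jensen for $\log$ conditionally on $\X$ followed by integration under $Q$ for (ii), and the observation that a diagonal $\Sig$ makes $\z^\sigma=P_\sigma\z$ a relabelling of fixed scores for (iii). Your nonnegative-gap argument for the equality case, your check of the mixture e-value's integrability, and your appeal to Cholesky uniqueness make explicit details that the paper's proof leaves implicit.
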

\begin{proof}
(i) Fubini and the e-value property: $\E_{H_0}[\bar E]=\E_\sigma\E_{H_0}[E^\sigma]=1$.
(ii) Condition on $\X$; $\log$ is concave, so Jensen gives
$\log\E_\sigma[E^\sigma(\X)\mid\X]\ge\E_\sigma[\log E^\sigma(\X)\mid\X]$, with equality iff
$E^\sigma(\X)$ is a.s.\ constant over $\sigma$ given $\X$; take $\E_Q$ and apply Fubini.
(iii) With diagonal $\Sig$, $\z^\sigma$ is a permutation of the fixed coordinatewise standardized
scores, and an exchangeable base e-value---e.g.\ the coordinate-mixture \eqref{eq:evalue}---is
invariant to the order, so $E^\sigma(\X)$ is constant in $\sigma$.
\end{proof}

The content of (ii) is elementary but worth unpacking. Conditional on the data, the per-ordering
e-values $E^\sigma(\X)$ are a collection of numbers---one view of the same observation per
ordering---and the theorem compares two policies: average the numbers first and take the log
(pooling), or take the log first and average (a single random ordering, in expectation). Because the
logarithm is concave, averaging first always wins, and the margin---the \emph{Jensen gap}---is
governed by how much the views disagree. Under independence the views are all identical
(part~(iii)) and pooling buys nothing; under strong dependence different orderings whiten the same
observation very differently, the views disagree a lot, and pooling's margin is large. This
is the mechanism we read behind the empirical pattern that the boost grows with $\rho$ in the
designs of Section~\ref{sec:exp}, but it should not be overread. The proposition is a statement
about expected log-evidence: it does not imply that the \emph{calibrated power} gain is positive
or that it grows with dependence, and the Jensen gap itself need not be monotone in $\rho$---for
$n=2$ and $\rho\to1$ the two Cholesky whitenings agree up to the signs of the coordinates, so a
sign-invariant base statistic such as \eqref{eq:evalue} has a Jensen gap that vanishes at fixed
energy. The empirical sections verify that the power boost indeed materializes, quantify it, and
map where it is large or negligible.

\begin{remark}[When expected log-evidence controls power]\label{rem:bridge}
Write $\mu=\E_Q[\log E]$ and $s^2=\operatorname{Var}_Q(\log E)$, and reject when $E\ge1/\alpha$,
equivalently $\log E\ge t:=\log(1/\alpha)$---the threshold Markov's inequality licenses for any
e-value. Two bridges lead from $\mu$ to power.

\emph{(i) Repeated observations.} Suppose $T$ i.i.d.\ observations from $Q$ are tested and
their e-values multiplied---the natural use of evidence, valid with no correction---and that
$\mu>0$ with $\E_Q|\log E|<\infty$. Then $\log\prod_{i\le T}E_i$ has mean $T\mu$, and by the law of
large numbers the number of observations needed to reach any fixed target power $1-\varepsilon$,
$T_{\varepsilon}(\alpha):=\min\{T:\Prob_Q(\sum_{i\le T}\log E_i\ge t)\ge1-\varepsilon\}$,
satisfies $T_{\varepsilon}(\alpha)\sim t/\mu$ as $\alpha\downarrow0$ (i.e.\ $t\to\infty$).
Asymptotically, then, the required sample size is \emph{inversely proportional} to expected
log-evidence: a procedure with the larger $\mu$ needs proportionally less data for the same power.
The approximation is informative well before the limit. In the design of
Figure~\ref{fig:power} with the dependence swept over $\rho\in\{0.2,0.5,0.8\}$, pooling raises $\mu$
by factors $1.027$, $1.108$ and $1.114$, and the measured reduction in the number of observations
needed for power $0.80$ (sample sizes interpolated linearly between consecutive integers $T$)
is by factors $1.035$, $1.117$ and $1.125$---agreement to within one percent
at $T$ between two and three.

\emph{(ii) A single observation, strong signal.} At $T=1$, Cantelli's one-sided Chebyshev
inequality \cite{cantelli1928,blm2013} gives
\[
\Prob_Q(\log E\ge t)\ \ge\ \frac{(\mu-t)_+^2}{s^2+(\mu-t)^2},
\]
increasing in $\mu$ and decreasing in $s^2$: raising expected log-evidence without inflating its
dispersion raises a lower bound on power. The bound is informative only when $\mu>t$---when the
expected evidence already exceeds the level one is testing at, $2.996$ nats at $\alpha=0.05$. That
condition fails through most of our designs, and this is the honest reason our fixed-level gains are
a fraction of the evidence gains rather than proportional to them. Where it holds it orders the two
procedures correctly, and pooling is then measured to reduce $s^2$ as well as raise $\mu$, so both
terms move the same way: at $\rho=0.8$ and $\mathrm{ncp}=20$ the bound rises from $0.12$ to $0.30$
under pooling while true power rises from $0.59$ to $0.72$, and at $\mathrm{ncp}=30$ from $0.64$ to
$0.72$ against true power $0.90$ to $0.95$.
\end{remark}

\begin{proposition}[Optimal number of orderings for Bonferroni]\label{prop:optM}
Fix the data at an alternative configuration and suppose that, over the random draw of $\sigma$, the
per-ordering Simes $p$-value equals a small value $p_g$, $0<p_g\le\alpha$, with probability
$q\in(0,1)$ (a ``concentrating'' ordering) and is $>\alpha$ otherwise. Draw $M$ i.i.d.\ orderings. Then the
Bonferroni-over-orders test (reject iff $M\min_m p^{\sigma_m}\le\alpha$) has power
\[
\beta(M)=\begin{cases}1-(1-q)^M, & M\le \alpha/p_g,\\[2pt] 0, & M>\alpha/p_g,\end{cases}
\]
so $\beta$ is increasing on the feasible region and the optimal count is
$M^\ast=\lfloor\alpha/p_g\rfloor$; the optimal power $1-(1-q)^{M^\ast}$ reaches
$1-\varepsilon$ once $M^\ast\ge\lceil\log\varepsilon/\log(1-q)\rceil$, i.e.\ roughly
$p_g\le\alpha\log(1-q)/\log\varepsilon$
(at $M^\ast\approx1/q$ it is only about $1-e^{-1}\approx0.63$ for small $q$).
\end{proposition}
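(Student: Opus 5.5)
The plan is to condition on the data, so that the only randomness left is the i.i.d.\ draw $\sigma_1,\dots,\sigma_M\sim\Unif(S_n)$. Under the stated two-point model, each $p^{\sigma_m}$ independently equals $p_g$ with probability $q$ and exceeds $\alpha$ otherwise. The Bonferroni statistic $M\min_m p^{\sigma_m}$ then takes one of two forms: it is at least $M\min_m p^{\sigma_m}>M\alpha\ge\alpha$ when no ordering concentrates, and it equals $Mp_g$ when at least one does.

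The first key step is the case split on $M$. For $M>\alpha/p_g$, even the good event gives $Mp_g>\alpha$, and the bad event gives a value $>\alpha$ as well, so the rejection region is empty and $\beta(M)=0$. For $M\le\alpha/p_g$, rejection happens exactly on the event that at least one $\sigma_m$ is concentrating. By independence of the draws this event has probability $1-(1-q)^M$, which yields the displayed formula.

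Next, monotonicity: $1-(1-q)^M$ is strictly increasing in $M$ since $0<1-q<1$. The feasible region is the set of integers $\{1,\dots,\lfloor\alpha/p_g\rfloor\}$, which is nonempty because $p_g\le\alpha$. So $\beta$ is maximized at the largest feasible integer, $M^\ast=\lfloor\alpha/p_g\rfloor$, and every larger $M$ gives zero power.

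Finally, the sample-size statement. The power $1-(1-q)^{M^\ast}$ is at least $1-\varepsilon$ iff $(1-q)^{M^\ast}\le\varepsilon$, i.e.\ $M^\ast\log(1-q)\le\log\varepsilon$. Since both logarithms are negative, this is $M^\ast\ge\log\varepsilon/\log(1-q)$, equivalently $M^\ast\ge\lceil\log\varepsilon/\log(1-q)\rceil$ since $M^\ast$ is an integer. Substituting $M^\ast\approx\alpha/p_g$ and ignoring the floor gives the heuristic $p_g\le\alpha\log(1-q)/\log\varepsilon$. For the parenthetical, at $M^\ast\approx1/q$ we have $(1-q)^{1/q}\to e^{-1}$ as $q\to0$ (indeed $(1-q)^{1/q}\le e^{-1}$ for all $q$), so the power is approximately $1-e^{-1}\approx0.63$.

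I expect no real obstacle. The only points needing care are the boundary conventions: the inequality is nonstrict at $M=\alpha/p_g$, and the failure event must be checked to give $M\min_m p>\alpha$ (which holds because each $p>\alpha$ and $M\ge1$). The word ``roughly'' should also be flagged, since the floor in $M^\ast$ makes the $p_g$ condition only approximate.
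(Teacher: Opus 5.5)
Your proposal is correct and follows the paper's proof essentially step for step. Bad orderings can never trigger, since $M\min_m p^{\sigma_m}>M\alpha\ge\alpha$; good ones trigger iff $Mp_g\le\alpha$; and rejection is then the event that at least one of the $M$ i.i.d.\ draws is good, with probability $1-(1-q)^M$, increasing in $M$. Your extra checks of the $1-\varepsilon$ threshold and of $(1-q)^{1/q}\to e^{-1}$ are also correct, and they fill in claims the paper states without proof.
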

\begin{proof}
A ``bad'' ordering has $p>\alpha\ge\alpha/M$ for $M\ge1$, so it never triggers; only ``good''
orderings (value $p_g$) can, and they do iff $p_g\le\alpha/M$, i.e.\ $M\le\alpha/p_g$. Conditional on
that, rejection occurs iff at least one of the $M$ i.i.d.\ orderings is good, with probability
$1-(1-q)^M$, which is increasing in $M$. Hence within the feasible region the largest admissible $M$
is optimal, $M^\ast=\lfloor\alpha/p_g\rfloor$.
\end{proof}

\begin{remark}\label{rem:dome}
Proposition~\ref{prop:optM} is a statement about the \emph{nominal} $\alpha/M$ threshold, and its
hard cliff is an artifact of the stylized two-point model. With real alternatives the per-ordering
$p$-value has a continuous distribution with substantial mass near zero, so as $M$ grows the best of
$M$ draws shrinks almost as fast as the threshold tightens, and the cliff softens into a shallow
dome: power rises steeply, peaks at a moderate $M^\ast$, then declines slowly. The decline is
eventual for every fixed signal: the smallest per-ordering $p$-value over the finite set of all
orderings, $p_{\min}(\X)=\min_{\sigma\in S_n}p^\sigma(\X)$, is positive almost surely, so
$M\min_{m\le M}p^{\sigma_m}\ge Mp_{\min}\to\infty$ and the nominal Bonferroni test eventually
loses all power as $M\to\infty$. Within the range of $M$ we tested, the decline was visible only for
weak signals (in a hub design with a weak one-coordinate shift, power peaks at $M=2$ with $0.933$ and
decays to $0.877$ by $M=512$; at stronger signals the curve saturates at $1$ and no decline is
visible up to $M=512$). The averaging combiner shows no dome in the same sweep---its nominal-threshold power
was nondecreasing in $M$ in every configuration we ran; we note that this is an empirical
statement about these designs, not a monotonicity theorem for fixed-level power, which
Proposition~\ref{prop:main} does not supply. None of this survives calibration, which is what
we recommend in practice: with
bootstrap-calibrated thresholds the $\times M$ penalty is absorbed into the critical value and
calibrated power is essentially \emph{flat} in $M$ for every combiner. The
practical reading is therefore simply that a modest $M$ suffices, and that in every calibrated
configuration we ran, ``too many'' orderings did not hurt.
\end{remark}

\begin{remark}[Two orthogonal levers; calibration unlocks the orbit lever]
Improving a profile-sensitive test has two independent axes: (a) choosing a profile- and sign-robust
base statistic (the two-sided bet)---a property of the \emph{test}; and (b) combining across
orderings to escape an unfavorable profile---a property of the \emph{aggregation}. With nominal
thresholds, the combiners carry multiplicity/conservativeness penalties (Bonferroni's $\times M$, the
merge's $\times2$, the e-value's Markov slack) that can erase the orbit advantage and make a single
ordering look as good or better. Bootstrap calibration (\S\ref{sec:calib}) removes these
penalties---and is independently required for validity under estimation---so the richer combined
statistic realizes its advantage. This is the mechanism behind the headline result.
\end{remark}

\section{The base statistic must be two-sided}\label{sec:misspec}

The Rosenblatt scores are standardized \emph{conditional} quantities, and the sign with which a
departure surfaces in them depends on the ordering: in Example~\ref{ex:bivar} the same observation
produced $U=1.8\times10^{-5}$ under one ordering and $U=0.999998$ under the other---the identical
surprise, delivered once to the lower and once to the upper tail. So under reordering, wrong-tail
signal is the norm, not an edge case, and the classical one-sided Simes combination---which fixes
one tail in advance; in Figure~\ref{fig:blindspot} it is the upper tail, $p_k=1-\Phi(z_k)$, so
only large \emph{positive} scores register as evidence---is unusable when the sign of the
departure is not known in advance. Its failure is worse than
lost power: a
wrong-sign departure sends the affected $p$-value toward $1$, where the statistic quietly
discards it while the critical constants $\alpha i/n$ remain scaled for all $n$ coordinates, so the
rejection probability falls strictly \emph{below} the level and stays there however extreme the data
become (as the shift grows it converges to $\alpha(n-1)/n$, the level carried by the $n-1$ null
coordinates---an exact consequence of the Simes null identity, and visible as the plateau at
$0.047\approx0.05\times19/20$ in Figure~\ref{fig:blindspot}A). Nor can the failure be repaired by
simply testing the other tail: whitened departures generically carry \emph{both} signs at once (the
hub example of \S\ref{sec:exp} whitens to one positive and nineteen negative components), and
Figure~\ref{fig:blindspot}B makes the point cleanly by shifting half the coordinates by $+c/\sqrt n$
and half by $-c/\sqrt n$---either one-sided orientation then sees at most half the signal (by
symmetry both give the same power curve), and the two-sided test dominates uniformly in $c$. We
therefore fold every coordinate two-sided, $p_k=2\,\Phi(-|z_k|)=2\min(U_k,1-U_k)$: the fold acts
coordinatewise, so the folded values remain exactly i.i.d.\ uniform under $H_0$, and evidence
registers whichever tails the whitening chooses.

\begin{figure}[t]\centering
\includegraphics[width=\linewidth]{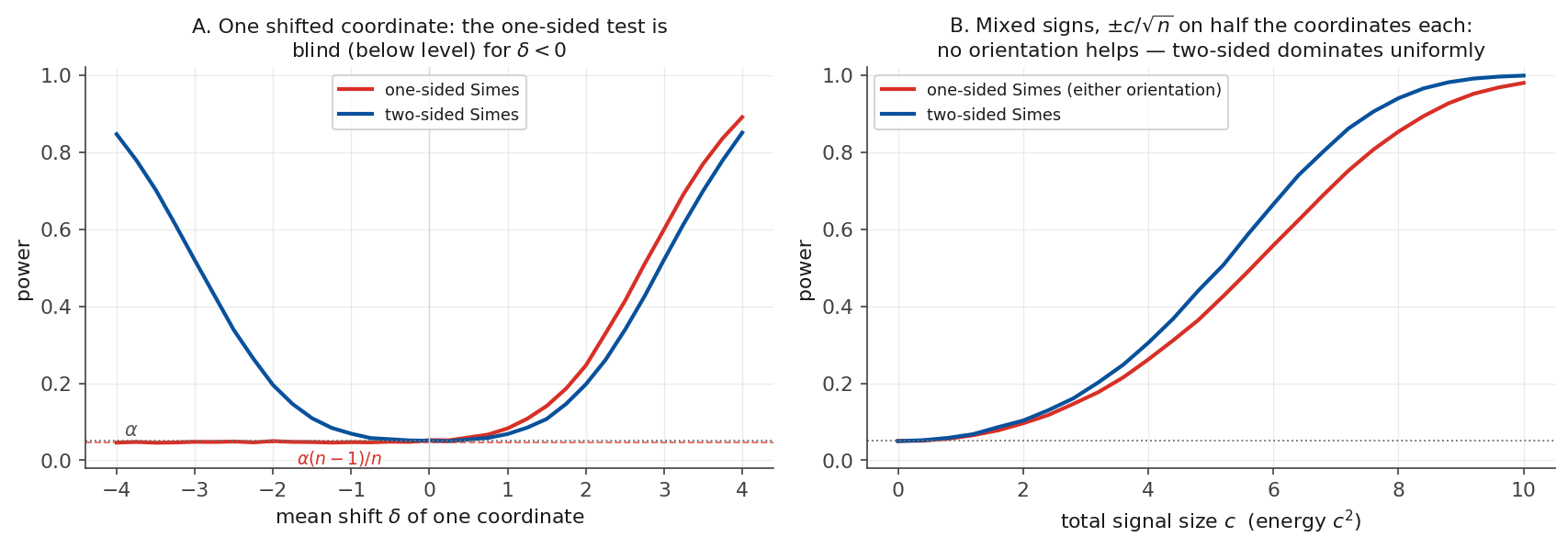}
\caption{One- vs.\ two-sided Simes ($n=20$; independent null, where all orderings coincide). (A)
Signed single-coordinate shift: the one-sided test is blind to $\delta<0$ and sits strictly below
its level (dashed: the limit $\alpha(n-1)/n$); the two-sided fold restores symmetric detection at
negligible cost for $\delta>0$. (B) Mixed-sign shift, $\pm c/\sqrt n$ on half the coordinates each:
no one-sided orientation can see more than half the signal, and the two-sided test dominates
uniformly---the deficiency is structural, not a matter of choosing the right tail.}
\label{fig:blindspot}
\end{figure}

\section{Calibration and the payoff: Gaussian experiments}\label{sec:exp}

With the orbit established (Section~\ref{sec:theory}) and the base test fixed
(Section~\ref{sec:misspec}), only the third ingredient remains. We use the experiments to do two
things at once: show that under an estimated null the naive pooled test is
miscalibrated and that the bootstrap repairs it (\S\ref{sec:calib-exp})---and then, on the level
playing field calibration creates, measure the payoff the theory motivates: a power gain over the
expected single random ordering, one that grows with dependence in these designs, together with a shape-robustness
advantage over order-invariant tests (\S\ref{sec:headline}); a gain that persists under
estimation. Practical guidance on the choice of combiner is deferred to \S\ref{sec:which}, since
it turns on evidence from both this section and the screen of \S\ref{sec:fdr-main}.

\paragraph{Protocol}
Unless stated otherwise: the null is $\mathcal N(\mathbf 0,\Sig)$ with $\Sig$ equicorrelated,
$\Sig_{ij}=\rho$ for $i\neq j$ and $1$ on the diagonal; the alternative shifts a single coordinate
(``sparse'') or all coordinates equally (``dense''), with the squared Mahalanobis distance of the
alternative from the null---equivalently the \emph{noncentrality parameter}
$\mathrm{ncp}:=\mut^\top\Sig^{-1}\mut$, the ``signal energy'' of Lemma~\ref{lem:energy} and the
noncentrality of the $\chi^2$ oracle---fixed at
$\mathrm{ncp}=12$ in \emph{every} experiment so that detectability is
comparable across configurations (holding energy fixed is what makes ``sparse vs.\ dense'' or
``$\rho=0.2$ vs.\ $\rho=0.8$'' comparisons meaningful: every alternative is equally far from the
null, only its shape changes); level $\alpha=0.05$; bet grid $\mathcal T=\{1,2,3\}$; $M=12$ random
orderings; calibration by Procedure~1 with $B$ replicates. We estimate $\Sig$ by the sample
covariance from $\Ntr$ null vectors (``heavy estimation'' uses $\Ntr=4n$; ``known $F$'' uses the true
$\Sig$). Each reported number averages over $R$ Monte-Carlo realizations of the fitted model, each
with a batch of fresh test vectors; all seeds are fixed integers. Power is compared \emph{at matched
calibrated level}, so differences are genuine power differences rather than calibration artifacts.
For the headline runs, $B\in\{199,299\}$, $R\in\{150,300\}$; full settings are in the scripts
(Appendix~\ref{app:repro}).

Two design points deserve emphasis. \emph{First}, the single-ordering baseline is drawn \emph{afresh
in every realization}, so its average estimates the expected power of an arbitrarily chosen
ordering---the quantity the theory speaks to. This matters: the power of one fixed ordering depends
strongly on which coordinates precede the departure, so freezing a single unlucky permutation
across realizations can inflate the apparent gain from pooling several-fold. Because the
single-ordering baseline is a Simes $p$-value while the e-value average pools mixture e-values,
we also carry the calibrated mixture e-value of one random ordering (``single-ordering e-value''),
so that the e-value path can be compared like with like. \emph{Second}, the two order-invariant references of
\S\ref{sec:baselines}---the $\chi^2$ energy test and the symmetric-root test---are carried through
every table, both bootstrap-calibrated by the same Procedure~1 with $\hat\Sig$ in place of $\Sig$. The
alternatives we study are \emph{location} (mean-shift) departures; a systematic evaluation against
scale, dependence, and shape alternatives is left to future work.

\subsection{Calibration restores level---and is necessary}\label{sec:calib-exp}

Under an estimated covariance the na\"ive plug-in test is badly miscalibrated, in \emph{opposite}
directions for different combiners (Figure~\ref{fig:boot}A), and each direction has a clear cause.
The min-type statistics are \emph{anti}-conservative---at $\Ntr=4n$ the single-ordering Simes
realizes size $0.20$, $\chi^2$ and Fisher $0.28$, the symmetric-root test $0.19$---because whitening
with a noisy $\hat\Sig$ leaves residual structure in the scores and a min-type statistic latches onto
whichever coordinate the estimation error happens to make extreme. The e-value average errs the
other way, spending $0.029$ of its level at $\Ntr=4n$ and $0.001$ at known $F$: its Markov threshold
$1/\alpha$ is loose, so most of the level goes unused. Bonferroni-over-orders and the $p$-merge are
conservative once the estimation noise that inflates every min-type statistic subsides
(from $\Ntr=8n$ on) for a related reason, their union bound and factor $2$ ignoring that the $M$ orderings
are strongly dependent recomputations of one observation. The re-estimating bootstrap (Procedure~1)
brings \emph{every} method to size $\approx\alpha$ at every training size (Figure~\ref{fig:boot}B,
drawn on the same scale as panel A). Calibration is therefore both a validity requirement and the
enabler of a fair power comparison: without it, the power ranking of the methods would mostly
reflect how much level each one wastes. Size control after calibration is not confined to this
configuration: across all $64$ configuration$\times$method cells of the wider robustness sweep the
calibrated size averages $0.050$ (range $0.042$--$0.059$).

Figure~\ref{fig:boot} shows the whole pipeline---miscalibration, repair, and payoff---in a single
setting: $n=20$, $\rho=0.5$, two same-sign shifted coordinates (one step off the one-coordinate
knife edge where the symmetric-root test is at its best, \S\ref{sec:headline}). At matched size the
pooled $p$-merge outperforms both order-invariant references of \S\ref{sec:baselines}, the single
random ordering and the single-ordering Fisher test, at every training size and at known $F$.

\begin{figure}[t]\centering
\includegraphics[width=\linewidth]{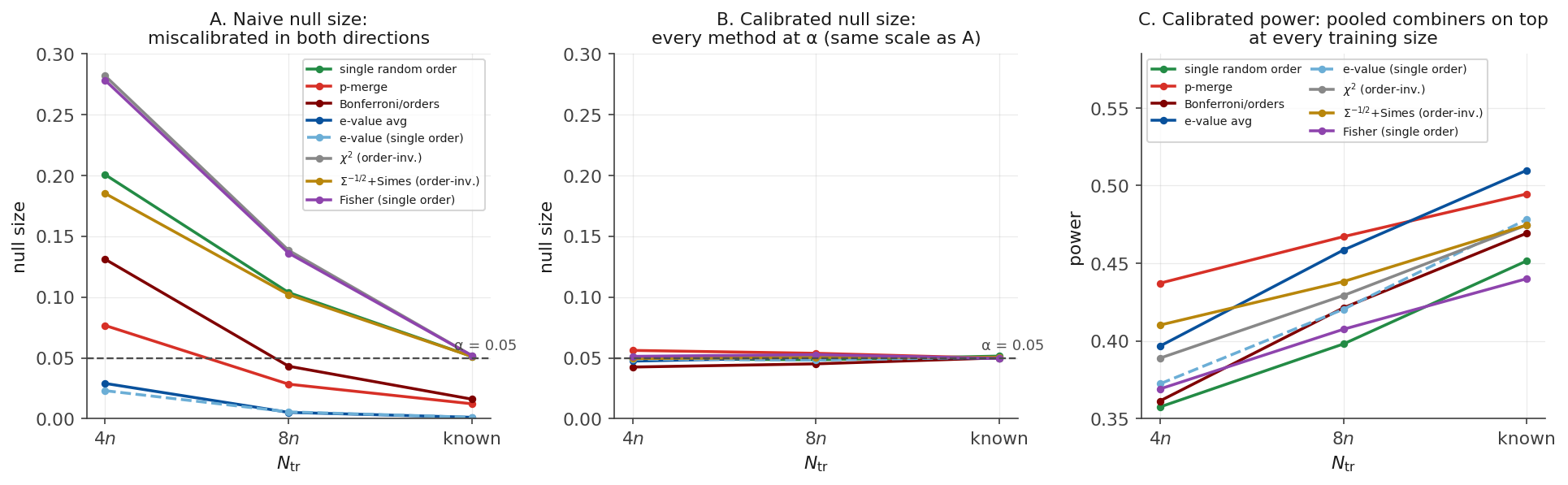}
\caption{One setting, whole pipeline ($n=20$, $\rho=0.5$, two same-sign shifted coordinates,
$\mathrm{ncp}=12$, $M=12$). (A) Na\"ive thresholds are miscalibrated in both directions under an
estimated null ($0.28$ for $\chi^2$ and Fisher down to $0.001$ for the e-value average). (B) After
bootstrap calibration every method sits at $\alpha$ at every training size (same $y$-scale as A).
(C) On that level playing field the pooled p-merge outperforms every order-invariant reference at
every training size and at known $F$ ($0.437$, $0.467$, $0.495$ against the best reference, the
symmetric-root test, at $0.410$, $0.438$, $0.475$, which $\chi^2$ ties at known $F$). The dashed line is the calibrated
\emph{single-ordering} mixture e-value; its distance to the e-value average ($+0.024$, $+0.038$,
$+0.032$) is the gain from aggregation alone, the base statistic held fixed.}
\label{fig:boot}
\end{figure}

\subsection{Shape robustness, and a gain that grows with dependence}\label{sec:headline}

The practical case for pooling is visible in Figure~\ref{fig:power}, and it is a
statement about \emph{robustness to the shape of the departure} rather than a headline power number.
Against a \emph{sparse} departure the dense-friendly tests trail badly (Fisher $0.578$, $\chi^2$
$0.629$) while the symmetric-root test is the strongest single competitor ($0.723$). Against a
\emph{dense} departure of the same energy the ranking inverts violently: the symmetric-root test
collapses to $0.376$---its whitening spreads a dense signal thinly over all coordinates, exactly
where a Simes-type statistic is weakest, whereas the Cholesky whitenings concentrate it---and
$\chi^2$ moves to the front of the invariant references, with the single-ordering Fisher test
close behind. No reference is safe in both regimes, whereas the pooled combiners hold
the top two places in each: $p$-merge $0.730$ and the e-value average $0.728$ against the sparse
departure, Bonferroni-over-orders $0.691$ and the e-value average $0.662$ against the dense one.
Which \emph{pair} leads changes with the shape---$p$-merge drops to $0.577$ on the dense departure
and Bonferroni to $0.713$ on the sparse---and the e-value average is the only test in the top two of
both, which is the finding \S\ref{sec:which} takes up. An analyst who must fix a test before seeing the departure is therefore
better served by pooling than by any single whitening.

The gain over the \emph{expected} single random ordering---the baseline
Proposition~\ref{prop:main} speaks to, in expected-log-evidence terms---grows with dependence
in this sweep, an empirical regularity the proposition motivates but does not assert: at known
$F$ it rises from $+0.017$ at $\rho=0.2$ to $+0.042$ and $+0.063$ at
$\rho=0.5$ and $0.8$ for the e-value average, and $+0.018$, $+0.041$, $+0.059$ for the p-merge.
The gains are modest---strong dependence enlarges the orbit, but most orderings of an
equicorrelated $\Sig$ are reasonably favorable---and statistically unambiguous, with paired
$t\ge5$ in every known-$F$ cell. A cleaner decomposition compares the e-value average with
the calibrated mixture e-value of a \emph{single} ordering (Figures~\ref{fig:boot} and
\ref{fig:power}), which holds the base statistic fixed and isolates aggregation: at known $F$
that gap is $+0.012$, $+0.033$ and $+0.049$ at $\rho=0.2$, $0.5$ and $0.8$, $+0.025$ on the sparse
and $+0.095$ on the dense departure of Figure~\ref{fig:power}, and $+0.021$ to $+0.040$ across
$n=5$, $10$, $20$ (paired $t\ge4.7$ in every known-$F$ cell, $t\ge3.3$ under estimation).

\begin{figure}[t]\centering
\includegraphics[width=0.78\linewidth]{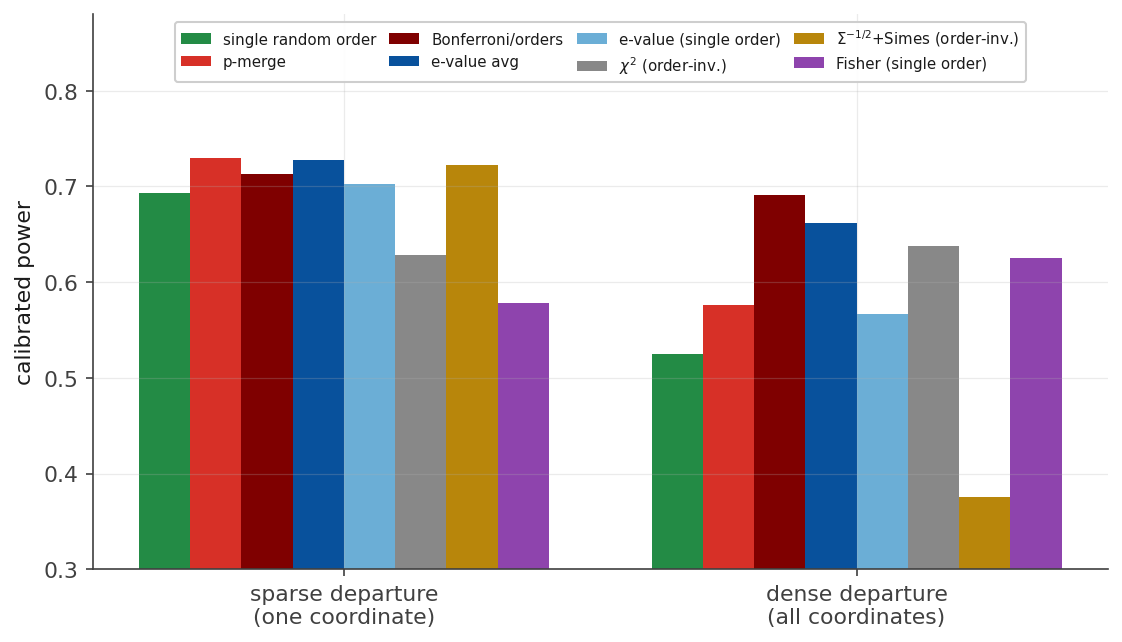}
\caption{Calibrated power against departure shape at equal energy ($n=10$, $\rho=0.5$,
$\mathrm{ncp}=12$, $M=12$, known $F$); all tests are bootstrap-calibrated to size $\alpha$, so the
comparison is at matched level. The order-invariant references swap places between the two regimes:
the symmetric-root test is strong on the sparse departure and collapses on the dense one, Fisher
does the reverse, and $\chi^2$---flat across shapes by Proposition~\ref{prop:chi2}---moves up the
ranking on the dense departure only because the others move. The pooled combiners are at or near the top of both. The light-blue
bar is the calibrated single-ordering mixture e-value, the base statistic of the e-value average
without aggregation.}
\label{fig:power}
\end{figure}

\section{Screening many hypotheses}\label{sec:fdr}

\subsection{Calibrated pooling in the multiplicity layer}\label{sec:fdr-main}

When a collection of multivariate hypotheses $H_0^{(j)}:\X^{(j)}\sim F$, $j=1,\dots,N$, is screened
simultaneously, the single-hypothesis test of Section~\ref{sec:method} plugs into a standard
multiplicity layer---BH on per-hypothesis $p$-values, or e-BH on per-hypothesis e-values
$\bar E_M^{(j)}$. It introduces no new methodology, but it is the setting in which the shape
robustness of \S\ref{sec:headline} pays most clearly, because a screen rarely faces one kind of
departure: we let the non-null hypotheses fail heterogeneously, half on a single coordinate and half
on all of them, at common energy $\mathrm{ncp}=20$ ($n=20$, $\rho=0.5$, $200$ hypotheses,
$\pi_0=0.95$, $q=0.10$, $R=200$ replications). One design point matters: the bootstrap replicate
count should satisfy $B>N/q$, since a bootstrap $p$-value cannot fall below $1/(B+1)$ while BH's
\emph{smallest} threshold is $q/N$---the condition guarantees the whole BH ladder is reachable
(it is not necessary for making rejections: the $k$-th threshold is $qk/N$, so with $B=999$
ten non-null $p$-values of $0.001$ would all be rejected at the tenth rung, $0.005$; what a smaller
$B$ forfeits is the bottom of the ladder);
we use $B=2500$.

Two findings, both in Figure~\ref{fig:estF}. First, validity again hinges on calibration, and
again fails in both directions: with the na\"ive plug-in, BH realizes FDR $0.80$ for $\chi^2$,
$0.66$ for a single ordering and $0.61$ for the symmetric-root test against a target of $0.10$ at
$\Ntr=4n$, while e-BH on the conservative e-value average spends nothing at all (realized FDR
$0.000$ at known $F$). Calibrating each per-hypothesis statistic by Procedure~1 and applying BH to
the resulting bootstrap $p$-values brings every method to target---as an empirical matter:
the $N$ test vectors are independent, but their calibrated $p$-values share the estimated
$\hat\Sig$ and the bootstrap pool, so BH's positive-dependence condition is not established here
and the realized-FDR figures are evidence of control under that shared calibration rather than a
theorem---with one exception worth naming:
$p$-merge realizes $0.137$ at $\Ntr=4n$, three standard errors above the $q\pi_0=0.095$ that BH
guarantees. The re-estimating bootstrap is an asymptotic device, and at four observations per
dimension it is not yet exact; a sparse screen, where few rejections make the realized proportion
sensitive to the tail of the calibrating distribution, is where that shows. It is gone by
$\Ntr=8n$. Second, on that level playing field the pooled combiners dominate: at known $F$,
Bonferroni-over-orders detects $0.642$ of the non-nulls and the e-value average $0.619$, against
$0.474$ for a single ordering, $0.353$ for the symmetric-root test and $0.349$ for $\chi^2$; the
same ordering holds under estimation ($0.490$ and $0.482$ against $0.323$, $0.320$ and $0.240$ at
$\Ntr=8n$). Bonferroni-over-orders leads here although it trails in the single-hypothesis
comparison of Figure~\ref{fig:boot}; \S\ref{sec:which} takes that up next, the short version being
that BH interrogates each statistic at $qk/N$ rather than at $\alpha$. The reason the pooled combiners
lead as a group is the shape robustness of
\S\ref{sec:headline}: the symmetric-root test recovers the sparse half of the non-nulls and misses
the dense half, $\chi^2$ is mediocre on both, and only the pooled tests are competitive on each. A
final advantage of e-BH, under a fully specified null where the $\bar E_M^{(j)}$ are e-values,
is that it controls FDR under \emph{arbitrary} dependence among the
hypotheses---relevant when the $\X^{(j)}$ share data or structure---whereas BH on $p$-values
requires independence or positive dependence.

\begin{figure}[t]\centering
\includegraphics[width=\linewidth]{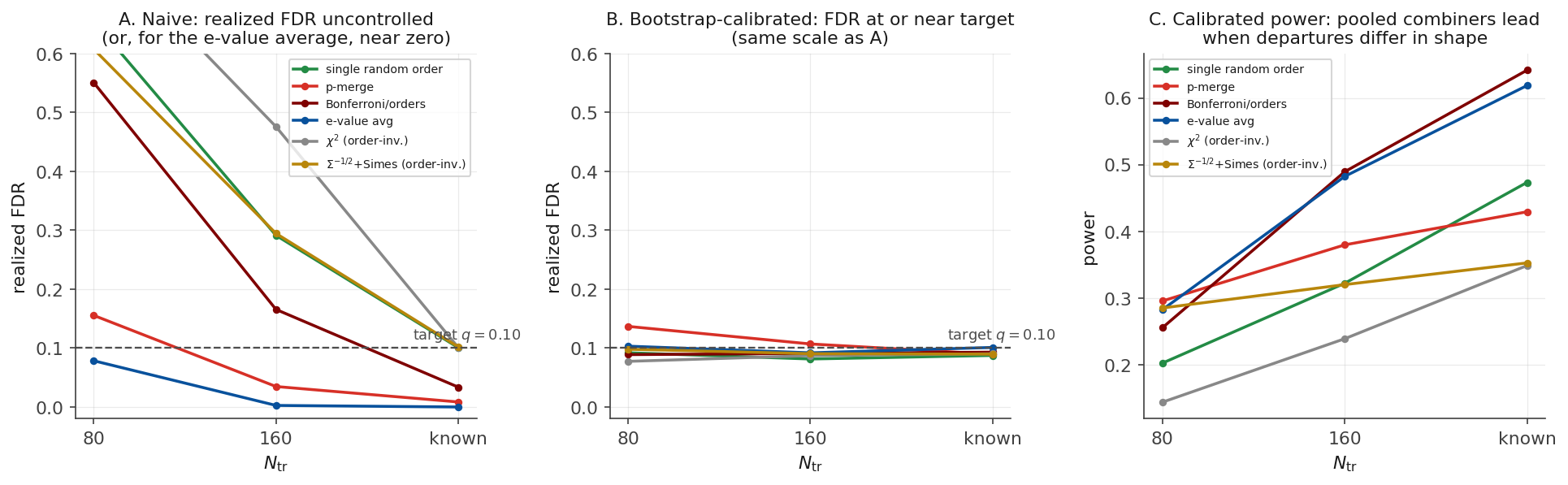}
\caption{Multiplicity layer with \emph{heterogeneous} departures ($n=20$, $\rho=0.5$, $M=12$,
$\mathrm{ncp}=20$, $200$ hypotheses, $\pi_0=0.95$, non-nulls half sparse and half dense, target
$q=0.10$, $B=2500$, $R=200$). (A) With na\"ive per-hypothesis statistics the realized FDR is
uncontrolled for the min-type and order-invariant tests and near zero for the e-value average.
(B) After bootstrap calibration every method sits at target at every training size, apart from
$p$-merge at $\Ntr=4n$ (same scale as A). (C) Only then is a power comparison meaningful: the
pooled combiners lead at the two larger training sizes, because no shape-tuned competitor handles
both kinds of departure; at $\Ntr=4n$ every method is weak and the differences are within two
standard errors. Na\"ive power is deliberately not shown, being bought with excess FDR.}
\label{fig:estF}
\end{figure}

\subsection{Which combiner, and how many orderings}\label{sec:which}

\textbf{e-value averaging} is the combiner to reach for by default. It is never the weakest of the
three pooled rules in any setting we ran, and it is never far from the best. Figures~\ref{fig:boot}
and \ref{fig:estF} report three training sizes each, $\Ntr=4n$, $8n$ and known $F$; across those six
settings the e-value average's largest shortfall against the leading combiner is
$0.040$, against $0.076$ for Bonferroni-over-orders and $0.212$ for $p$-merge. That is the whole
argument for it. The alternatives win in one regime and lose in the other---$p$-merge leads the
single-hypothesis comparison at $\alpha=0.05$ (Figure~\ref{fig:boot}) and trails
Bonferroni-over-orders by $0.212$ in the screen (Figure~\ref{fig:estF}); Bonferroni leads the screen
and is last of the three at $\alpha=0.05$---whereas the e-value average is second in five of the six
settings and first in the remaining one, and in the threshold sweep reported below it is first or
second at every level from $0.05$ down to $0.0005$. A combiner chosen before one knows whether the departure will
be sparse or dense, or whether the threshold will be conventional or a screen's, should be the one
without a bad case.

Choosing it costs nothing in the form of the output. Under bootstrap calibration the e-value average
is treated exactly like any other statistic---Procedure~1 returns a calibrated $p$-value for it
directly, its accuracy being that of the bootstrap itself (Section~\ref{sec:calib-exp}) plus
the $1/(B+1)$ granularity that every rank-based $p$-value shares. Under a \emph{fully specified} null, where $\bar E_M$ is a genuine e-value
(Lemma~\ref{lem:evalue} and the caveat after it), no calibration is needed at all: Markov's
inequality gives the valid $p$-value $1/\bar E_M$ at once. The converse does not hold: a $p$-value cannot be turned into
an e-value without a calibrator and a loss. So under a fully specified null the e-value
average yields both currencies, feeds e-BH directly for FDR control under arbitrary dependence
among the (valid) e-values (Section~\ref{sec:fdr}), and composes across independent datasets by
multiplication (or sequentially, when each factor is an e-value conditionally on the past); under
an estimated-population null it is a calibrated test statistic like the others, and the e-value
reading is unavailable.

Why the two specialists trade places is worth setting out, since it is what makes the default
safe. A ranking of combiners stated at $\alpha=0.05$ does not survive a move to the far smaller
thresholds a screen imposes. Benjamini--Hochberg over $N$ hypotheses reports a rejection
only when some $p$-value reaches $qk/N$, which for the screen of \S\ref{sec:fdr-main} means
$0.0005$ to $0.005$ when the rejection set consists of the non-nulls (false positives raise
$k$ and loosen the rung): one to two orders of magnitude below the level of a single test. The
combiners are not equally placed to supply such values. Bonferroni-over-orders, $M\min_m P^\sigma_m$,
keys on the single most favourable ordering, so its calibrated $p$-value has a heavier left tail
than the averaging rules and a worse typical value; $p$-merge and the e-value average use all $M$
views, which pays at a moderate threshold and costs at an extreme one. Holding the design of
Figure~\ref{fig:boot} completely fixed---same covariance, dimension, alternative (two shifted
coordinates, $\mathrm{ncp}=12$) and $M$, known $F$---and sweeping only $\alpha$ makes the mechanism
plain (the sweep is a separate, larger Monte-Carlo run with one fixed set of orderings and
$2\times10^5$ draws, so its $\alpha=0.05$ values differ a little from Figure~\ref{fig:boot}'s):
$p$-merge leads Bonferroni-over-orders at $\alpha=0.05$ ($0.510$ against $0.487$) and trails
it at $\alpha=0.0005$ ($0.045$ against $0.066$), having lost over nine-tenths of its power to
Bonferroni's six-sevenths, with nothing but the threshold changing. Against the dense half of the
Figure~\ref{fig:estF} screen (all coordinates shifted, $\mathrm{ncp}=20$) the same sweep is
starker: $p$-merge falls from $0.757$ to $0.094$ while Bonferroni falls only from $0.892$ to
$0.393$, their gap widening from $0.135$ to $0.299$. This is why Bonferroni-over-orders is the weakest of the three pooled
combiners in Figure~\ref{fig:boot} and first in Figure~\ref{fig:estF}. It is also why the e-value average is the
safe default: alone among the three it is competitive at both ends of the threshold range, so it
does not require the user to know in advance which end they are at. If that is known, the
specialists are slightly better---$p$-merge for a single hypothesis at a conventional level,
Bonferroni-over-orders in a sparse screen.

\subsection{Comparison with conformal novelty detection}\label{sec:conformal}

A reader who screens many multivariate observations for anomalies will ask how this compares with
conformal novelty detection, which has developed rapidly in exactly this space
\cite{bates2023,marandon2024}. The honest answer begins with a distinction, because the two
literatures test different nulls.

Our null is \emph{conformity to a specified $F$}: $H_0^{(j)}:\X^{(j)}\sim F$, with $F$ supplied by
the modeller and, when unknown, estimated from a training sample and accounted for by
Procedure~1. The conformal null is \emph{exchangeability with a reference sample}: given inliers
$Y_1,\dots,Y_{\Ntr}$ and test points $\X^{(1)},\dots,\X^{(N)}$, one computes a nonconformity score
$S$, holds out $\ell$ of the reference points for calibration, and forms
\begin{equation}\label{eq:confp}
p_j \;=\; \frac{1+\#\{i \in \text{cal}: S_i \ge S(\X^{(j)})\}}{\ell+1},
\end{equation}
which is super-uniform under exchangeability alone. Under the independence and scoring
assumptions of Bates et al.\ \cite{bates2023}---test inliers and reference data jointly
independent, with conditions on the score distribution, which our simulation design
satisfies---these $p$-values are PRDS, so BH controls FDR; Marandon et al.\ \cite{marandon2024} observe that a
prespecified score wastes power and propose \emph{AdaDetect}, which splits the reference sample,
trains a probabilistic classifier to separate the fit part from the calibration part pooled with the
test sample, and uses the predicted probability of belonging to the latter as $S$. Because that
score is invariant to permutations within calibration-plus-test, \eqref{eq:confp} remains valid, so
the score may be \emph{learned} without any distributional assumption.

The trade is therefore explicit. By conformal methods we mean, as in the works cited,
procedures calibrated against observed inliers alone (a rank $p$-value of the same form computed
against draws \emph{simulated} from a specified model is the Monte-Carlo test of Procedure~1, and
belongs to the parametric route). They buy validity without distributional assumptions, at the
price of needing a reference sample at test time, and they answer a different question from ours:
whether a new observation resembles the reference data, not whether a given model is correct. If a
model misdescribes the reference data and the new data in the same way, the new data remain
exchangeable with the reference sample and a conformal test has nothing to detect, whereas the
parametric route asks directly the question a risk manager asks---is \emph{this} model right?
Because the two routes test different nulls, their power can be compared only where the nulls
coincide, that is, when the specified model is correct, and there one should expect access to the
model to help. The comparison below therefore asks not \emph{whether} the model helps but
\emph{how much}: how large the advantage is once the model must be estimated from the same
reference sample the conformal methods receive, in the sparse regime in which screens operate, and
whether a learned conformal score closes the gap. We do not study misspecification, where the two
routes answer different questions and power is not the relevant comparison.

\paragraph{Design} We reuse the screening problem above ($n=20$, $N=200$ hypotheses,
$\mathrm{ncp}=20$, half sparse and half dense non-nulls, $\rho=0.5$, $q=0.10$, $M=12$, $B=2500$),
supply both routes with the same reference sample, and compare five conformal scores: the Mahalanobis
form $\mathbf{y}^\top\hat\Sig_{\mathrm{fit}}^{-1}\mathbf{y}$, the nearest-neighbour distance, the
negative log of a Gaussian kernel density estimate, and AdaDetect with a random forest and with
penalized logistic regression on quadratic features.

\paragraph{Held-out calibration size} The resolution consideration behind $B>N/q$ on the
bootstrap binds far harder on the conformal calibration set once novelties are rare. A conformal
$p$-value is a multiple of $1/(\ell+1)$, so reaching the $m_1$-th BH threshold $qm_1/N$
requires $\ell+1\ge N/(qm_1)$, with $m_1$ a lower bound on the number of rejections
(Remark~2.2 of \cite{marandon2024} states the condition in the strict form $\ell>N/(qm_1)$); with
the ten non-nulls of \S\ref{sec:fdr-main} this is $\ell\ge199$, strictly $\ell>200$. We start the
reference-sample grid at $\Ntr=400$ and use a balanced split $\ell=\Ntr/2$, which clears the
condition at $\Ntr=1000$ and $2000$; at $\Ntr=400$ the split $\ell=200$ sits exactly on the
resolution boundary, where any nonempty BH rejection set contains at least ten hypotheses (if
no true null is rejected, all ten non-nulls must be rejected together), so the $\Ntr=400$
conformal column of Table~\ref{tab:conformal} is resolution-limited by construction and should be
read as such. Where a method's authors recommend a value we use theirs: for AdaDetect
the same remark recommends $\ell=N$, which we adopt wherever its strict condition admits it
and replace by the balanced split where it does not (the balanced split is a design choice
fixed in advance; the resolution condition only bounds $\ell$ from below). All figures are averages over $R=200$
replications, with Monte-Carlo standard errors given in the captions.

\begin{table}[t]\centering\small\setlength{\tabcolsep}{4pt}
\caption{Comparison with conformal novelty detection in the sparse screening regime
($n=20$, $N=200$ hypotheses, $\pi_0=0.95$, so ten novelties, $\mathrm{ncp}=20$, $q=0.10$,
$R=200$). Entries are calibrated power; every method controls FDR, the largest realized value being
$0.107$ against a target of $0.10$, within one Monte-Carlo standard error. Standard errors are at
most $0.012$ on the realized FDR and $0.018$ on the powers. Every tuning constant is fixed in
advance and none is selected on the results: the distance to the nearest neighbour as the
nonparametric score, and a balanced calibration split $\ell=\Ntr/2$, a design choice fixed
in advance (the resolution condition only bounds $\ell$ from below). The data are generated from
the specified Gaussian model, the setting in which the two routes test the same null.}
\label{tab:conformal}
\begin{tabular}{lccc}
\toprule
& $\Ntr=400$ & $\Ntr=1000$ & $\Ntr=2000$\\
\midrule
Pooled, Bonferroni-over-orders (this paper) & \textbf{0.629} & \textbf{0.644} & \textbf{0.663}\\
Pooled, e-value average (this paper)        & 0.605 & 0.613 & 0.630\\
Pooled, $p$-merge (this paper)              & 0.459 & 0.445 & 0.465\\
$\chi^2$ energy test                        & 0.336 & 0.338 & 0.369\\
\addlinespace
Conformal, nearest-neighbour distance       & 0.073 & 0.405 & 0.520\\
Conformal, Mahalanobis                      & 0.038 & 0.183 & 0.336\\
Conformal, kernel density                   & 0.037 & 0.141 & 0.265\\
AdaDetect, random forest                    & 0.005 & 0.116 & 0.235\\
AdaDetect, penalized logistic               & 0.003 & 0.026 & 0.063\\
\bottomrule
\end{tabular}
\end{table}

\paragraph{What the specified model is worth when it is right}
Table~\ref{tab:conformal} answers the quantitative question. With the model estimated from the same
reference sample, the pooled tests lead across the whole range of reference-sample sizes: at
$\Ntr=1000$---five times the size of the screen itself---Bonferroni-over-orders detects $0.644$ of
the ten novelties against $0.405$ for the best conformal method, and at $\Ntr=2000$, ten times the
screen, $0.663$ against $0.520$, gaps of $10.6$ and $8.4$ Monte-Carlo standard errors. The rows
compare complete procedures, each a score together with a calibration, and the specified model
enters the pooled tests through both: through the Rosenblatt score, and through a calibration that
simulates from the fitted model rather than holding out inliers. One pair of rows uses the same
quadratic-score form. The conformal Mahalanobis score and the $\chi^2$ energy test are the same quadratic form (with
the covariance estimated from the fitting part of the reference sample in the conformal case),
calibrated respectively against held-out inliers and by Procedure~1, so the gap between
them---$0.038$ against $0.336$ at $\Ntr=400$, $0.183$ against $0.338$ at $1000$, $0.336$ against
$0.369$ at $2000$---illustrates the combined benefit of model-based calibration and of retaining
the whole reference sample for fitting (the conformal version fits on half of it and calibrates
on the other half, with the resolution of a held-out set): decisive when reference data are
scarce and small once they are plentiful. The gaps narrow over the reference-sample sizes
examined; performance beyond $\Ntr=2000$ was not assessed, and different scores need not converge
to equal power (the conformal Mahalanobis score approaches an energy test, which at known $F$
remains well below the pooled combiners, Figure~\ref{fig:estF}). The reference samples of the
larger sizes are in any case larger than the approximately $250$ observations used in our FX
application, which calibrates on a single year of returns and tests the next.

It is worth also comparing the conformal methods with one another. AdaDetect exists because a
prespecified nonconformity score wastes power, and it replaces one with a score learned from the
data; here that reverses. At $\Ntr=2000$ its random forest reaches $0.235$ and its penalized
logistic version $0.063$, while the three prespecified scores reach $0.520$, $0.336$ and $0.265$:
the two learned scores are the weakest of the five. This is not a criticism of AdaDetect, which is
designed for a regime it is not being given here. A classifier trained to separate inliers from a
test sample can only learn what the novelties in that test sample reveal, and ten novelties diluted
among a hundred and ninety inliers reveal very little.

\section{Application: validating a Gaussian foreign-exchange risk model}\label{sec:fx}

We close the empirical work with a real risk-management application in which the coordinate ordering
is genuinely arbitrary: daily foreign-exchange (FX) returns. The FX market is where currencies are
traded; an \emph{exchange rate} is the price of one currency in another (e.g.\ dollars per euro),
and a currency's daily \emph{return} is the relative change in that price from one trading day to
the next. A bank or fund holding positions in several currencies summarizes a day of market movement
as one vector $\X_t\in\R^n$---the day's returns of the $n$ currencies it is exposed to---and its
\emph{risk model} is a probability distribution for that vector, used to set capital buffers and
loss limits. The simplest and still most widely used such model is multivariate Gaussian with mean
and covariance estimated from recent history. \emph{Validating} the model means asking, day after
day, exactly this paper's question: is today's observed vector consistent with the model's
distribution? In the language of the forecasting literature, this is multivariate density-forecast
evaluation \cite{dgt1998,dovern2020}: the fitted $\mathcal N(\hat\mut,\hat\Sig)$ is a one-day-ahead
density forecast, held fixed within each year, and each day's realization is scored against it. And the coordinates of that vector are currencies: no convention orders them, so any
fixed choice (alphabetical tickers, portfolio weights, legacy code) is exactly the hidden analyst
degree of freedom this paper is about.

\paragraph{Setup}
Data are daily spot exchange rates for nine major currencies against the U.S.\ dollar (euro,
Japanese yen, British pound, Canadian dollar, Swiss franc, Australian and New Zealand dollars,
Swedish and Norwegian kronor; all expressed as USD per foreign unit), published by the Federal
Reserve (H.10 release, via FRED), 2015--2025: $2{,}747$ days of log-returns with $n=9$. The model to
validate is refitted in the standard \emph{walk-forward} way: at the start of each calendar year,
$\hat\mut$ and $\hat\Sig$ are estimated from the preceding year's $\approx250$ trading days (so
$\Ntr\approx28n$---moderate, realistic estimation), and every day of the new year is tested against
$\mathcal N(\hat\mut,\hat\Sig)$ out of sample. The null tested each day deserves a precise
statement, because two readings are possible and they license different guarantees. We test the
\emph{issued forecast}, stated conditionally: $H_0^{(t)}:\X_t\mid\mathcal F_{t-1}\sim\mathcal
N(\hat\mut_t,\hat\Sig_t)$, with $(\hat\mut_t,\hat\Sig_t)$ the numbers the model actually produces
from information available before day $t$---the density-forecast-evaluation
null of \cite{dgt1998}. Under this null the whitened scores are \emph{exactly} standard normal,
Lemma~\ref{lem:evalue} applies as stated, and the raw pooled e-values are valid without
adjustment; rejections mean ``the forecast, as issued, is wrong about this day.'' This differs
from the estimated-population null of \S\ref{sec:calib}, under which the plug-in scores are not
e-values (see the caveat after Lemma~\ref{lem:evalue}) and only calibrated $p$-values are
available. The two nulls call for different calibrations, and we run both. Under the
issued-forecast null the whitened scores are exactly standard normal, so each statistic's null
distribution is obtained by simulating from the issued forecast itself and whitening by the same
fixed factors as the data (Procedure~1 with step~1 replaced by exact draws from
$\mathcal N(\hat\mut_t,\hat\Sig_t)$; no re-estimation). Under the estimated-population null we use
the re-estimating bootstrap exactly as in Section~\ref{sec:exp}, with mean and covariance
re-estimated inside every replicate. The per-year rejection rates below are reported under both;
the e-value ranking and the e-BH screen are statements about the issued forecasts only. For each
test year $2016$--$2025$ we run, at calibrated level $\alpha=0.05$ ($B=999$, $M=12$ fresh
orderings per day): the single-random-ordering Simes test, the three pooled combiners, and the two
order-invariant references. Every day is one multivariate hypothesis---the paper's basic setting---and the ten years
form a $2{,}497$-day screening problem for e-BH. If the Gaussian model were exactly right, every
method would flag $\approx5\%$ of days each year; systematic exceedance of $5\%$ is evidence of model
failure, and \emph{which} days carry the evidence is exactly what a risk manager wants to know.

\paragraph{The test behaves as designed}
In calm years the calibrated rejection rates sit near the nominal $5\%$ (2017: $0.06$; 2021: $0.03$;
2023: $0.06$; 2024: $0.04$ for the e-value average under the estimated-population calibration;
$0.06$, $0.03$, $0.08$, $0.05$ under the issued-forecast calibration, which absorbs no
estimation noise and so runs at or a little above the other in every year). One caution in reading this: with the truth
unknown, a near-nominal rejection rate is \emph{consistent with} adequacy rather than proof of
it---a misspecified family can produce the same rate for this statistic. What the calm years do
establish is that the calibrated procedure behaves sensibly out of sample: last year's
covariance is not visibly contradicted by this year's ordinary days. In stress years the model is flagged
massively: $30\%$ of days in 2020 (the COVID pandemic), $49\%$ in 2022 (the U.S.\ Federal Reserve's
aggressive rate hikes and the resulting dollar surge), $16\%$ in 2025 (the April tariff shock),
$14\%$ in 2016 (the United Kingdom's Brexit referendum); $32\%$, $55\%$, $17\%$ and $14\%$
under the issued-forecast calibration. The pooled e-values also \emph{rank} the
history interpretably: an e-value is the factor by which the day's evidence multiplies a bet
against the model, so $\bar E_M=20$ is borderline ($=1/\alpha$) while
$\bar E_M=10^{13}$ is overwhelming. The largest values in ten years, in order, are the Brexit
referendum result day 2016-06-24 ($\bar E_M\approx10^{20}$: sterling fell $8\%$ against a model
fitted on placid 2015), the Bank of Japan's surprise loosening of its yield-curve control on
2022-12-20 ($\approx5\times10^{17}$: a yen-specific jump), the COVID ``dash-for-dollars'' days
2020-03-18/20/23 ($5\times10^{12}$--$10^{14}$), the day the Swiss National Bank unexpectedly raised rates
alongside a large Fed hike, 2022-06-16, and the 2025-04-04 tariff announcement
(Figure~\ref{fig:fx}A). None of these dates was supplied to the procedure; they emerge from the
statistics alone. Finally, applying na\"ive e-BH at $q=0.10$ across all $2{,}497$
days---consecutive days are serially dependent, which e-BH tolerates by construction, no
adjustment needed---flags $78$ days, essentially the union of these episodes: a defensible list
of ``days on which the issued forecasts demonstrably failed,'' with FDR control that is valid
under the forecast nulls (under the estimated-population reading no such guarantee is claimed;
see the caveat after Lemma~\ref{lem:evalue}).

\paragraph{What pooling buys here---and honestly, what it does not}
The day-level paired gain of the e-value average over a single random ordering is $+0.004\pm0.004$
under both calibrations: statistically indistinguishable from zero (the days are serially dependent, so the standard
error is computed with a Newey--West adjustment; the conclusion of a null gain is unchanged
either way). This is the operating map of \S\ref{sec:headline} speaking,
not a failure: real FX stress days are \emph{far} from the borderline-power regime---when the
departure is a $10$-sigma event every ordering rejects, and when the day is ordinary none does. The
practical value of pooling in this application is different and threefold: it removes the arbitrary
ordering from the \emph{definition} of the procedure (the reported statistic no longer depends
on a ticker convention: at finite $M$ it is a Monte-Carlo approximation to the fully
order-symmetric statistic, whose distribution is the same whichever labelling the data arrive in;
the realized number is reproducible for a given input labelling and seed, and exact invariance
holds in the $M\to\infty$ average), it supplies valid forecast-null
e-values for dependence-robust day screening via e-BH, and its observed rejection rates are
indistinguishable from a single ordering's---which, without ground-truth labels for the days, is a
statement about how often the methods flag, not a demonstration of equal power. The
episode-by-episode pattern of \S\ref{sec:headline} also appears here, with the same caveat that
these are observed flag rates rather than measured power: the
best-performing reference \emph{changes across episodes} (the symmetric-root test leads in 2020,
$\chi^2$ leads in 2022; Figure~\ref{fig:fx}B), while the pooled test tracks within
$0.01$--$0.04$ of whichever is winning without knowing the episode's shape in advance.

\begin{figure}[t]\centering
\includegraphics[width=\linewidth]{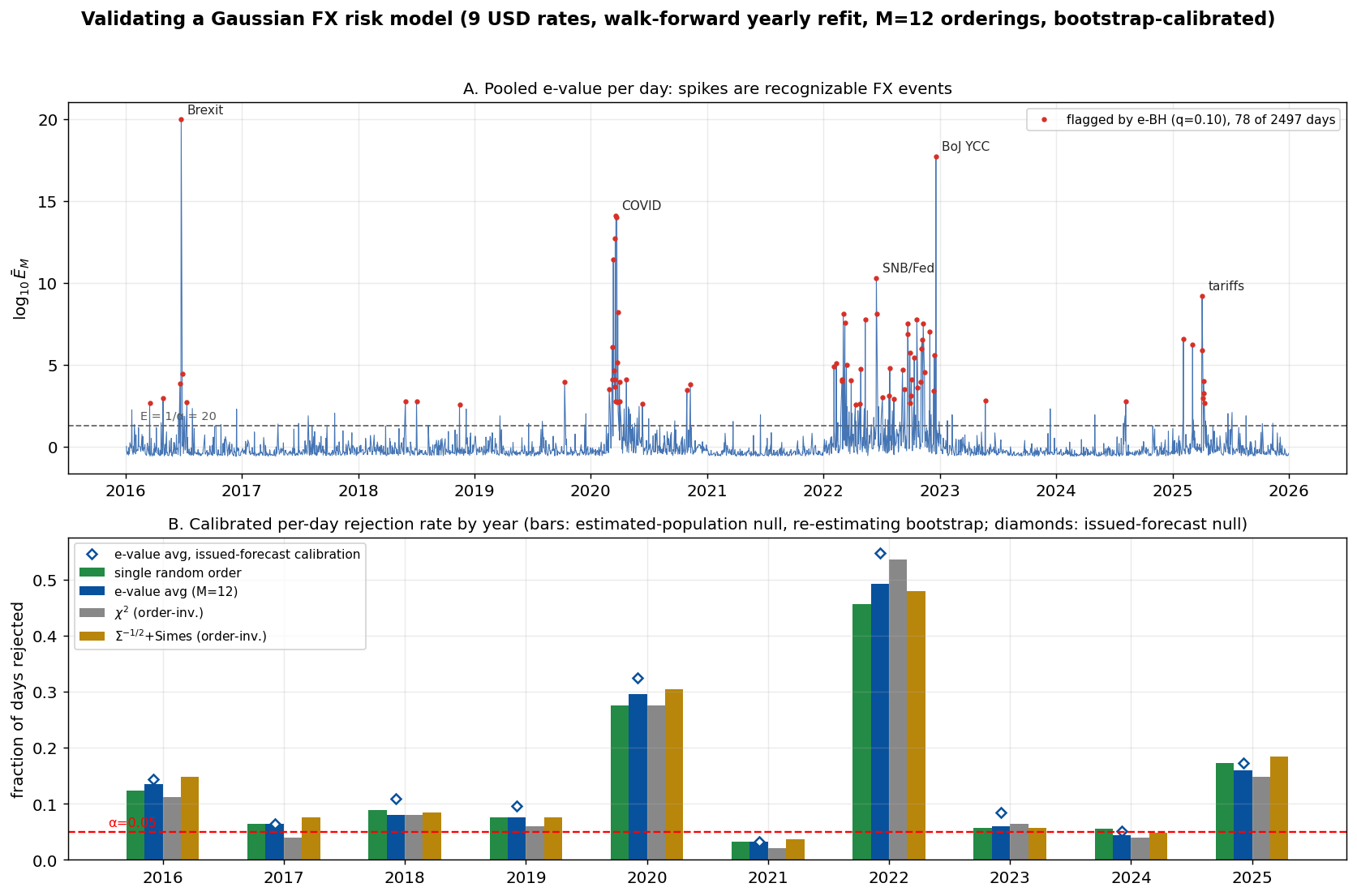}
\caption{FX application (9 currencies vs.\ USD, 2016--2025, yearly walk-forward refit). (A) Pooled
e-value per day; red dots are the $78$ days flagged by e-BH at $q=0.10$; annotated spikes are the
Brexit referendum, the COVID dollar scramble, the SNB/Fed and BoJ surprises of 2022, and the 2025
tariff shock. (B) Calibrated per-day rejection rates by year (bars: estimated-population null,
re-estimating bootstrap; diamonds: e-value average under the issued-forecast null, exact
simulation from the issued forecast): near-nominal in calm years, large in
stress years; the leading order-invariant reference changes across episodes while the pooled test
stays near the top.}
\label{fig:fx}
\end{figure}

\section{Discussion}\label{sec:disc}

The paper followed a single question---whether one must gamble on a coordinate ordering---to a
three-part answer: reordering only reshuffles a fixed signal, so pooling \emph{can} help; the base
test must look in both tails, or the gamble is the least of one's problems; and calibration is what
converts the latent advantage into a usable one. We close by stating what this buys, where it stops,
and how it connects.

\paragraph{What the evidence supports}
The reason to pool is not that any one ordering is bad but that one cannot know in advance which
ordering, or which fixed whitening, suits the departure at hand. That is why the gain over a single
random ordering is real yet modest, while the gain over a \emph{committed} alternative---an
order-invariant test tuned, implicitly, to one shape of departure---is large and, in a screen whose
hypotheses deviate from the null in different ways, decisive. The theory says the same thing in
another language: reordering conserves the signal energy and only redistributes it
(Lemma~\ref{lem:energy}), so pooling cannot manufacture evidence, it can only stop one from
betting everything on a single view of it (Proposition~\ref{prop:main}). The foreign-exchange
application shows the machinery surviving contact with data it was not designed around: rejection
rates stay near nominal out of sample in calm years (consistent with adequacy, not proof of it),
the largest e-values are the episodes a risk manager would name
unprompted, and e-BH screens a decade of serially dependent days without adjustment.

\paragraph{An honest operating map}
The boost over a single ordering is small at near-independence (small orbit), and it is masked
entirely by na\"ive thresholds---calibration is not optional. The symmetric-root test is a
competitive order-invariant alternative for sparse departures, and it collapses for dense ones
(Figure~\ref{fig:power}); the case for pooling is that it
does not need to know the signal's shape in advance. One methodological point deserves emphasis:
pooling must be evaluated against the \emph{expected} power of a random ordering, not against one
fixed ordering, whose value depends on which coordinates happen to precede the departure and can
overstate the gain several-fold.
Guidance on the choice of combiner is collected in \S\ref{sec:which} and not repeated here; the
short version is that the e-value average is the default, being the only one of the three pooled
rules without a regime in which it does badly, and that it costs nothing to adopt since bootstrap
calibration returns a $p$-value for it directly. The one prohibition worth restating is that the
arithmetic mean of $p$-values must not be used without calibration. To that map
\S\ref{sec:conformal} adds a boundary against conformal novelty detection. A learned conformal score
must learn the alternative from the novelties present in the test sample, and where those are rare
it has little to learn from; since screens are usually sparse, that leaves a good deal of ground to
a procedure that brings its own null, when that null is right.

\paragraph{Limitations and scope}
We deliberately restrict to the Gaussian null. Behavior under heavier-tailed and non-parametric nulls
is more intricate---the orbit geometry, calibration quality, and even the sign of the gain can change
with the tail behavior of $F$---and a full treatment is left to future work. It bears saying plainly
that the error control established here is contingent on that family being the right one: like every
test of a specified null, and unlike the conformal procedures of \S\ref{sec:conformal}, our
guarantee is a statement about departures from $F$ and not about departures from whatever generated
the data. Our calibration
machinery is general (Procedure~1 needs only the ability to simulate from $\hat F$ and recompute the
transform), so extending to copula and non-parametric nulls, where the orbit may be even richer, is a
natural next step. Our power study uses location (mean-shift) alternatives at fixed energy under
equicorrelated covariances (with a hub design used only for the nominal-threshold sweep of
Remark~\ref{rem:dome}); broader covariance designs, scale/dependence/shape alternatives,
estimated nuisance parameters, and skewed nulls remain open. The location-style Simes base statistic
is also not omnibus: for instance, an \emph{under}-dispersed coordinate ($\sigma=0.5$ against an
assumed $1$) leaves the power of the one- and two-sided versions alike at or below the nominal
level (about $9\times10^{-5}$ for the two-sided test at $\alpha=0.05$ in the one-dimensional
case), so scale-aware
base statistics are a natural addition to the toolkit.

\paragraph{Relation to prior work}
The closest predecessor is \cite{dovern2020}, who construct order-invariant tests from
conditional PITs across orderings and address estimation uncertainty; our full-order average
$\E_\sigma[E^\sigma\mid\X]$ is of course itself a symmetrization, so the distinction is not
aggregation versus symmetrization but \emph{what} is aggregated and with what guarantee. What
this paper adds over that construction is: aggregation at the level of \emph{evidence}
(e-values and $p$-values) with finite-$M$ validity under the arbitrary dependence that
recomputation creates (Lemma~\ref{lem:merge})---a randomized $M$-ordering test is valid as a
test, not only in a symmetrized limit; the two-sided base statistic, which removes the
directional blind spot of one-sided scoring when the sign of the departure is unknown
(Section~\ref{sec:misspec}); the re-estimating calibration that makes
the aggregate usable and powerful under estimation; and the multiplicity layer. Specification testing via the
Rosenblatt transform is classical for copulas \cite{genest2009} and for density-forecast evaluation
\cite{dgt1998}. The e-value averaging used to combine orderings, and the e-BH multiplicity layer of
Section~\ref{sec:fdr}, connect the test to recent e-value theory
\cite{vovkwang2021,vovkwang2020,wangramdas2022,grunwald2024}. Contemporary
PIT-of-order-statistics methods combine the transform with Cauchy combination \cite{covington2025}; our
contribution is orthogonal---the ordering-aggregation geometry, its calibration, and the power
characterization. The closest recent alternative for the screening problem of
Section~\ref{sec:fdr} is conformal novelty detection \cite{bates2023} and its adaptive extension
\cite{marandon2024}, which replace a specified null by exchangeability with a reference sample;
\S\ref{sec:conformal} sets out the different hypotheses the two answer and measures what the
specified model is worth when it is correct.

\paragraph{What this paper is not}
It is worth forestalling a natural misreading: the recipe here is not ``whenever several tests
exist, apply all of them and merge the evidence.'' Validity is not what separates the two:
Jensen's inequality and the merging lemmas apply to any collection of valid e-values. What the
generic recipe lacks is the structure that makes the average worth taking here, and this paper
itself furnishes the counterexamples. The $M$ views pooled here are not
different methods but the \emph{same} statistic under a symmetry of the null, and that structure is
what the results use: each view carries exactly the same total signal (Lemma~\ref{lem:energy}), so
averaging is over equally informed presentations, never over procedures of unequal merit; and the
choice being averaged over is \emph{forced}---the transform cannot be computed without an ordering---so
the alternative to pooling is not ``pick the best method'' but ``commit to an arbitrary one.''
Where those conditions fail, so does the slogan. Pooling adds nothing to the Fisher base statistic
and pooled Fisher can fall below a single ordering (Remark~\ref{rem:fisher}); against the
exchangeable dense alternative every ordering carries the identical \emph{mean} profile
($P_\sigma\mut=\mut$, with an exchangeable $\Sig$), so there is no ordering lottery for
randomization to fix and any pooling gain there operates through the noise rather than the
signal; and without calibration the multiplicity cost of merging cancels the gain
almost exactly (Remark~\ref{rem:dome}). What the paper actually argues is narrower: when a
\emph{nuisance choice} indexes equally informed views of one observation, do not commit---average,
and calibrate so that the averaging is free.

\paragraph{Acknowledgments}
Claude Opus 4.7 and Claude Fable 5 (Anthropic) were used in the preparation of this manuscript; the
author thanks Javad Alipanah for providing access to them. The author retains full responsibility
for the mathematical content and conclusions of the paper.

\paragraph{Declaration of generative AI in scientific writing}
During the preparation of this work the author used Claude (Anthropic) in the design and
verification of the experiments, in drafting, and in checking the mathematical arguments. After
using this tool, the author reviewed and edited the content as needed and takes full
responsibility for the content of the published article.

\appendix
\section{Reproducibility}\label{app:repro}
All experiments are Monte-Carlo simulations with fixed integer seeds, so every number, table and
figure in the paper can be regenerated exactly; we deliberately avoid process-dependent sources of
randomness. The implementation matches the paper one-to-one: the Gaussian Rosenblatt map is the
Cholesky whitening \eqref{eq:whiten}; the base statistics are those of \S\ref{sec:base}; the
combiners are those of \S\ref{sec:combine}; and every calibrated threshold is produced by
Procedure~1.

Three points of experimental design bear repeating, since they affect how the numbers should be
read. First, the $M$ orderings are redrawn in every Monte-Carlo realization in every
experiment behind a figure or table (the threshold sweep of \S\ref{sec:which} and the supporting
checks behind Remarks~\ref{rem:bridge} and~\ref{rem:dome} hold one fixed bank of orderings, as
stated there), so the single-ordering
baseline estimates the \emph{expected} behaviour of an arbitrarily chosen ordering rather than the
luck of one fixed permutation. Second, the order-invariant reference tests ($\chi^2$ and
$\hat\Sig^{-1/2}$+Simes) are bootstrap-calibrated by the same Procedure~1 as the Rosenblatt-based
tests, so all power comparisons are made at matched size. Third, in the multiplicity layer the
bootstrap replicate count should satisfy $B>N/q$: a bootstrap $p$-value cannot fall below $1/(B+1)$,
while the smallest Benjamini--Hochberg threshold is $q/N$ (the condition is sufficient for
full BH resolution---every rung of the ladder reachable---and not necessary for making
rejections, whose thresholds $qk/N$ are larger).

The foreign-exchange application of Section~\ref{sec:fx} uses the Federal Reserve's H.10 daily
noon exchange rates for nine currencies against the U.S.\ dollar, retrieved from the FRED database
maintained by the Federal Reserve Bank of St.\ Louis; the series identifiers, quotation conventions
and retrieval date are listed in the README of the code repository below. Complete code reproducing every figure, together with the
data-retrieval script, is available at

\smallskip
\noindent\hspace*{1em}{\footnotesize\url{https://github.com/mehrdad-pournaderi/Calibrated-Order-Randomized-Rosenblatt-Tests}}

\bibliographystyle{elsarticle-num}
\bibliography{refs}

\end{document}